\newtheorem{proposition}{Proposition}
\newtheorem{corollary}{Corollary}
\begin{document}
	
\begin{frontmatter}
\title{A stochastic approach to estimate distribution grid state with confidence regions}

\author[1]{Rasmus L. Olsen\corref{cor1}\fnref{fn1}}
\ead{rlo@es.aau.dk}
\author[1]{Sina Hassani\fnref{fn2}}
\ead{sinah@es.aau.dk}
\author[1]{Troels Pedersen\fnref{fn3}}
\ead{troels@es.aau.dk}
\author[2]{Jakob Gulddahl Rasmussen\fnref{fn4}}
\ead{jgr@math.aau.dk}
\author[1,3]{Hans-Peter Schwefel\fnref{fn5}}
\ead{schwefel@griddata.eu}

\cortext[cor1]{Corresponding author}
\affiliation[1]{organization={Department of Electronic systems, Aalborg University}, addressline={Fredrik Bajersvej 7A}, city={Aalborg}, postcode={9220}, country={Denmark}}
\affiliation[2]{organization={Department of Mathematical Sciences, Aalborg University}, addressline={Skjernvej 4A}, city={Aalborg}, postcode={9220}, country={Denmark}}
\affiliation[3]{organization={GridData GmbH}, city={Anger},  country={Germany}}


\begin{abstract}
Widely available measurement equipment in electrical distribution grids, such as power-quality measurement devices, substation meters, or customer smart meters do not provide phasor measurements due to the lack of high resolution time synchronisation. Instead such measurement devices allow to obtain magnitudes of voltages and currents and the local phase angle between those. In addition, these measurements are subject to measurement errors of up to few percent of the measurand. In order to utilize such measurements for grid monitoring, this paper presents and assesses a stochastic grid calculation approach that allows to derive confidence regions for the resulting current and voltage phasors. Two different metering models are introduced: a PMU model, which is used to validate theoretical properties of the estimator, and an Electric Meter model for which a Gaussian approximation is introduced. The estimator results are compared for the two meter models and case study results for a real Danish distribution grid are presented.
\end{abstract}

\begin{keyword}
  Grid, state estimation, modelling
\end{keyword}

\end{frontmatter}

\section{Introduction}
Due to the penetration of renewable energy sources into the distribution grid and due to additional loads and production of energy, e.g. from electrification of transport and building heatings, the reliability and efficiency of electricity distribution grids is challenged, and planning and operations of these grids can no longer be performed based on assumptions on loads and generation units. Therefore, grid management, grid optimization, and grid planning using knowledge of the current and historic operational state of the grid is increasingly important \cite{karthik2021, DSOGriplanningpaper}. 
Cable loads and voltage levels that exceed specific limits, must be detected and addressed by proper means, starting by enabling monitoring of the complete distribution grid. 
Furthermore in order to optimize voltage quality and grid efficiency, communication-based hierarchical control systems have been proposed and shown to be efficient \cite{yonghao}. The aforementioned functions of the distribution grid show best performance when the electrical state of the grid is known in all grid locations - while it is infeasible and costly to measure all these grid locations.  The necessary data which shows the condition of the grid and is used for control purposes are called the state variables of the system or in short, states of the system.\par

The large scale of the distribution grid and the presence of communication failure restrict the presence of necessary data for the algorithms. In addition, the impact of the uncertainty of the measured data needs to be considered. Therefore, a process in order to estimate the necessary grid state from the measured data and quantify the impact of measurement errors and missing measurements is necessary. This process is called state estimation in the literature \cite{1,3}. The measured or forecast data used for the estimation process are called the inputs of the system.\par
The state estimation process has for decades been widely used for the high voltage transmission system. These state estimation methods are now starting to become relevant in distribution system state estimation (DSSE), i.e. at medium and low-voltage level. While there are many similarities, the are also some major challenges \cite{review5} when applying DSSE. The key difference is within the observation possibilities, i.e. type and accuracy of measurements, where in the high voltage level, so-called PMUs are able to provide highly accurate synchronized phasor measurements for a large number of measurement location. In contrast to that, measurement devices in the distribution grid entail power quality measurement devices on primary substations, electrical measurement devices on secondary substations, smart meters and inverters at customer connections; none of the latter use accurate clock syncronisation to millisecond level and therefore they only provide magnitudes and local phase angles of RMS voltages and currents. Furthermore, the measurement errors caused by the device type and frequently also from the use of current transformers leads to a higher degree of uncertainty in the measurements; which must be accounted for in the DSSE applied in the MV or LV grid. This uncertainty has strong implications on applications and use of data driven decisions, e.g. planning of new cables and loss reduction \cite{DSOGriLosses, ImadACM}, since not knowing the accuracy of the data may easily lead to wrong decisions taken by the DSO, e.g. replacing the wrong cable or not detecting inefficient grids or faulty grids. Adding confidence regions to data will aid decision makers to know how much trust to put into application results and act accordingly.  

High unobservability of the distribution system is the second challenge of the DSSE, i.e., often the amount of measured data is not enough to estimate the states of the system \cite{observability}. This problem is due to the huge number of connection points of the distribution system compared to the transmission system. The ratio of imaginary and real parts of the line impedance ($x/r$) is low in the distribution grid, making the linear state estimation techniques (DC state estimation) impractical \cite{ratio}. 
Finally, the use of heterogeneous communication networks such as celullar, power-line communication, and low-bandwith technologies such as LoRa causes bandwidth limitations and latency and losses in collection of measurement data. 

According to the literature most of the DSSE methods rely on the weighted least square (WLS) approach \cite{review1,review3,review4}. The WLS method is used in different situations which have different types of states and measurements. It should be stated, that besides the measurement data, the algorithms use forecast data to avoid unobservability and achieve better performance. Other algorithms are used for the DSSE such as Kalman filter, interval state estimation, gradient-based method, and semi-definite programming \cite{6,12,13,14}. 

A wide variety of the states of the DSSE algorithm are used in the literature. Many papers consider the voltage phasors of the nodes as the state variables of the estimation technique. These states are used in both polar and rectangular forms. The polar form of the voltage phasors is used in \cite{1,3,5,6}, while the rectangular form of the voltage phasors is used in \cite{2,4}. Another set of state variables in the DSSE methods are current phasors of the branches. These methods are the most popular methods in the literature \cite{review1}. The current phasors are used in polar \cite{9} and rectangular forms \cite{7,8,10,11,12}. It is worth mentioning that in \cite{10,11}, the slack bus voltage magnitude is included in the states, and in \cite{12}, the slack bus voltage phasor in rectangular form is included in the states.

The inputs of the DSSE algorithms vary widely according to the method and the availability of the data. According to the literature the data are different combinations of active and reactive power of the loads and generators in the distribution grid, node voltage magnitudes, branch current magnitudes, and load current magnitudes \cite{review1,review2,review3,review4,review5}. The availability of the phasor measurement units (PMUs) makes it in principle possible to have the phasor measurements, while their deployment in distribution grids in practice is non-existent due to the tight requirements to time synchronization. Nevertheless, the phasor measurements are considered as the inputs in some DSSE algorithms \cite{10,11}.

On the other hand, the increasing deployment of Digital Electrical Measurement devices in transformer stations, and of Smart Meters and Smart Inverters at customer connections provides an increasing set of measurement locations in distribution grids, while these devices cannot determine absolute phase angles, hence are limited to determine magnitudes of voltages and currents and relative phase angles between the voltage and the current at the measurement point. The general poor clock synchronization of smart meters has previously made SE and DSSE at the low voltage level challenging. The impact of clock offsets on loss calculation applications has been analyzed in \cite{Antonius23}: the authors show that the choice of the specific algorithm for loss calculation has a drastic impact on the sensitivity of the calculated grid losses due to clock deviation errors. The paper furthermore addresses errors in the loss calculation caused by duration of measurement interval and by measurement errors; however, the considered loss calculation algorithms assume that all required input variables have been measured, so no DSSE approach is used in that paper.

In this paper, a DSSE algorithm is proposed based on the available measurements of the distribution systems. The DSSE algorithm is presented to fulfill the main objectives of the state estimation process, i.e.\ the estimation and calculation of confidence regions of the unknown parameters of the system. The DSSE algorithm estimates the node voltage phasors and branch current phasors of the distribution grid. The inputs of the system are the available measured node voltages and the branch currents at a subset of grid locations and these measurements are subject to measurement errors. The setup of the framework enables us to assess different metering models, ranging from PMUs to measurement devices with low synchronisation precision. 


Our contribution in this paper is a methodology to assess confidence regions for model based estimates of voltages and currents in electrical grids. We present a stochastic model framework for carrying out the estimation and derive confidence regions analytically, and subsequently show the use in a real world scenario from a Danish DSO. 

Section \ref{sect:model} describes the system model and the model of the measurement devices. Section \ref{sect:estimator} derives the maximum likelihood estimator and the confidence ellipses for the voltage and current phasors. Section \ref{sect:assessment} introduces the assessment approach and Section \ref{sec:casestudy} applies that approach to a real-life scenario of a distribution grid based on actual measurements.
Finally, Section \ref{sect:conclusions} summarizes the results and presents an outlook to future work.







 
 
 

\section{Stochastic models of errors in realistic electrical measurement scenarios}
\label{sect:model}

\subsection{System Model} \label{sec:model}
Consider a graph  with nodes and edges representing possible measurement points and power lines, respectively. Here, either an MV or an LV grid is considered.  In the MV  case,  nodes represent all MV busbars and sleeves including the  MV side busbar of the HV:MV transformer. For LV grids, nodes represent this first busbar is the LV side of the secondary transformer, the junction boxes (JBs), sleeves and customer connection boxes (CCBs).

%


The grid state  is given by a  voltage phasor for each node and a current phasor for each edge of the graph. This amounts to a number $n$ of voltage phasors $x_1\dots,x_n$ and a number $m$ of current phasors  $x_{n+1}\dots x_{n+m}$.  The $k$\textsuperscript{th} phasor is  of the form
\begin{equation}
    x_k = 
    \begin{cases}
     u_k e^{j\theta_k},& k=1,\dots,n,\text{(voltage)} \\ 
    i_k e^{j(\theta_k+\phi_k)},& k=n+1,\dots,n+m,\text{(current),}
        \end{cases}
\end{equation}
where  $u_k$, $i_k$, $\theta_k $ and $\phi_k$ denote respectively the "true" values of voltage magnitude, current magnitude, voltage phase and local phase angle between voltage and current. Collectively, these $N=n+m$ phasors constitute the state vector $x\in \mathbb C^N$.\footnote{Including both current and voltage phasors of the grid as the state vector facilitates inclusion of noise for all measurements. Furthermore, it enables a generic representation of the equation systems for a variable number of observed voltages or currents. 
}  
For simplicity, we focus on a single-phase representation of the grid.
We remark that generalizations to three phase unbalanced grids are possible; the number of phasors in the state vector will triple in that case. Note however, that another type of measurement error would need to be addressed for the practical application to three phase-grids, namely the wrong phase assignment at measurement devices. The latter is out of scope for this paper.

The state vector fulfills a number of linear constraints given by the grid. Kirchhoff's equations constrain the current phasors and further linear equations link the voltage phasors at two ends of a line with the current phasor on that line and the line impedance. These linear constraints can be summarized in a single matrix $C$ by the equation 
\begin{equation}
Cx = 0,
\label{eq:constraint}
\end{equation}
see \cite{HPSsgcomm2019}\footnote{In busses with injected power, the injected current is summed up with the load current; therefore, Kirchoff's equation holds in this situation. In addition, we aim to estimate all the currents and voltages hence, zero-injected busses are not Kron-reduced in this work.}. For generalizations of the contributing equations to 3-phase grids while remaining  in the space covered with linear equations, see \cite{florinpaper} for instance. 

In most situations it is impossible or impractical to measure the whole state vector directly. In fact,  only noisy measurements for functions of a subset of $x$ can be achieved.   For MV grids, those typically include measurements at the primary and secondary transformers and on a subset of MV lines. For LV grids measurements  typically include the secondary substation, a subset of the CCBs, or possibly in special cases also intermediate JBs. In this paper, we consider measurement devices that collect RMS measurements on voltage magnitude, current magnitude and local phase angle between voltage and current. We refer to these measurement devices as meters for brevity.  Measurement data is collected by wired or wireless communication. Furthermore, the data is  time-stamped and sampled in a synchronized manner so we focus on one particular time-interval across the grid, where clock synchronization errors are neglected here (see \cite{Imad21} for impact of inaccurate clocks). 

We assume that measurements can be obtained as in total $K$ real-valued entities from meters that are placed in the grid. 
The concatenation of all measurements from all meters can then be represented by an $K$-dimensional real vector $y = (y_1,\dots,y_K)$. The measurement vector $y$ is modeled as a stochastic vector depending on a subset of the state variables, expressed as  
\begin{equation}
    y = f(Dx,\epsilon) \in \mathbb R^K,
\end{equation}
with $\epsilon$ being a noise vector and the measurement matrix $D \in \{0,1\}^{K\times N}$ is defined so that the complex vector $Dx$ contains only the phasors for which measurements are obtained.

The estimation problem at hand is to recover the full state vector $x$  from vector  $y$, which contains the accumulated measurements. At its disposal, the estimator has knowledge of the grid topology specified by the matrix $C$ (the linear grid equations mapped from grid structure and grid impedance), the measurement matrix $D$ along with information on the metering function $f$ and  distribution of measurement errors.

\begin{figure}
	\centering
\resizebox{\linewidth}{!}{
\tikzstyle{block} = [draw, fill=white, rectangle,     minimum height=3em, minimum width=5em]
\tikzstyle{sum} = [draw, fill=black, circle, node distance=1cm]
\tikzstyle{input} = [coordinate]
\tikzstyle{output} = [coordinate]
\tikzstyle{pinstyle} = [pin edge={to-,thick,black}]

 \begin{tikzpicture}[auto, node distance=4cm,thick]
    \node [input, name=input] {};
    \node [block, right of=input,  node distance=2cm, pin={[pinstyle]above:$C\in\mathbb R^{Q\times N}$},] (grid) {Grid};
    
    \node [block, right of=grid, node distance=3.3cm, pin={[pinstyle]above:{$D\in\{0,1\}^{K\times N}$}}] (meters) {Meters};
    \node [block, right of = meters,, node distance=3cm] (prep) {Data Prep.};
    \node [block, right of = prep, node distance=3cm] (estimator) {Estimator};
    \node [output, right of=estimator,  node distance=3cm] (output){};
   \node [input, above of=meters,node distance=2cm](d) {} ;
   
    \draw [draw,->] (estimator) -- node[align=center, above]{$\hat x \in \mathbb C^{N}$,\\ conf. regions} (output);
    \draw [->] (grid) -- node [name=x] { $x\in \mathbb C^{N}$}(meters);
    \draw [->] (meters) -- node [name=Dx] { $y$}(prep);
    \draw [->] (prep) -- node [name=Dx] {$z$}(estimator);
\end{tikzpicture}}
	\caption{The power grid is specified by the matrix $C$ and subject to a load condition gives a state vector $x$ of voltages and current phasors. Measured characteristics of voltages and currents yielding a noisy vector $y$ used by the estimator to recover the full state vector $x$ along with confidence regions.}
	\label{fig:attacker_model}
\end{figure}
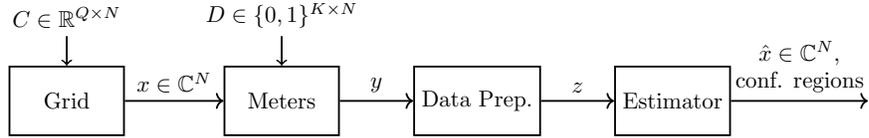

\subsection{Meter Models: PMU and Electrical Meter} \label{sec:EM}
In order to exemplify the measurement model description from the previous section and in preparation for the use-case scenario later  in the paper, we now introduce two different meter models for comparison: (1) A {\bf PMU} measures the two phasors of voltage and current at on specific end of a line; therefore it determines four real-valued measurands, namely the cartesian coordinates of the voltage and current phasor at this measurement location. As an error model, we assume independent complex 2-dimensional Gaussian noise to the Cartesian phasor representations of voltage and current respectively. This model is relevant to validate the theorems presented in the next section, and is used for validation of the confidence regions in the use-case section.

(2) An {\bf Electrical Meter} (EM) measures voltage and current magnitudes and furthermore the local phase angle between voltage and current phasor. Consequently it determines three measurands per measurement location. As measurement error, we here assume independent Gaussian noise on the magnitudes of voltage and currents and on the angle, $\phi$, between voltages and currents, reflecting real world low-cost meter measurement methodologies. 

As the metering models provide different measurands as output (collected in the real-valued vector $y$), a data preparation stage is used to translate the vector $y$ into a unified input $z$ to the estimator, see Figure \ref{fig:datagenerators}. This data preparation stage is specific for each meter model and introduced in the next subsection.

\subsection{Data Preparation for Estimation}\label{sec.modmeas}
The target is to derive (approximate) phasor representations for the current and voltage phasors as approximations for the true phasors in the complex vector $Dx \in \mathbb C^N$ from the measurements in the vector $y\in \mathbb R^M$.  To this end,  we define a vector function $g = (g_1,\dots,g_K)$ where each entry gives the complex-valued form of a voltage or current phasor derived from the measurements

For the PMU measurement model, the function $g$ is trivial to set up as the meter model already delivers the rectangular coordinates of the desired phasors. Therefore, assembling the complex phasors from the two coordinates is sufficient. 

For the EM measuring model, the voltage and current phasors need to be approximated from the measured magnitudes and local phase angle. , i.e.\
\begin{equation}
z_k = g_k( y_k) =
\begin{cases}
    (u_k + \epsilon_{u,k}) e^{j(\theta_k + \epsilon_{\theta,k})}& \text{(voltage)}\\
    (i_k + \epsilon_{i,k})e^{j(\theta_k + \phi_k + \epsilon_{\phi,k} + \epsilon_{\theta,k})}& \text{(current)}
\end{cases}  
\label{eq:5}
\end{equation}
Here,  $\epsilon_{u,k},\epsilon_{i,k},\epsilon_{\theta,k}$ and $\epsilon_{\phi,k}$ denote the respective measurement errors on magnitudes or phase angles (which are already added in the metering model itself). Measurement errors at meter $k$ are regarded as independent random variables with zero mean (unbiased errors). Furthermore, measurement errors from different meters are considered independent.

Since the EM metering model does not determine the voltage angle, extra assumptions are needed to define $z_k$. In particular,  in this case we introduce a \emph{pseudo-measurement} by assuming the voltage phase to be zero. This is a good approximation since the voltage phase $\theta_k$ is usually small in single-phase representations of distribution grids, when the reference point (e.g. the substation) is set to phase-angle zero.  Thereby, we obtain a pseudo measurement of the form \eqref{eq:5} with $\theta_k + \epsilon_{\theta,k}= 0$. 
Finally, we obtain a vector $z = (z_1,\dots,z_K)$ derived from measurements and influenced by the introduction of (pseudo-) measurements (here for the voltage phase angle).

\subsection{Complex Gaussian Approximation of measurement errors provided by EM model} \label{sect:covar}
The probability distribution of $z$ is unknown in general.  For the EM model, even if the distribution of the different $\epsilon_{.k}$ are known, the distribution of $z_k$ is unavailable due to the non-linear transform $g_k$ in \eqref{eq:5}.  
 For mathematical tractability, however, we \emph{approximate} the distribution of $z_k$ by a complex normal distribution defined such that it gives the same first and second moments as $z_k$, i.e\ 
\begin{equation}
    z_k \sim CN(\mu_k,\Sigma_{1,k},\Sigma_{2,k})
\end{equation}
with mean $\mu_k$, variance $\Sigma_{1,k} = \mathrm{Cov}(z_k,z_k)$ and pseudo variance $\Sigma_{2,k} = \mathrm{Cov}(z_k,z_k^*)$, with $^*$ being the complex conjugate.
The assumption of independent errors allows us to state the approximation for the distribution of  the vector $z$ as 
\begin{equation}
z \sim CN(\mu,\Sigma_1,\Sigma_2) 
\end{equation}
where the mean vector $\mu$ contains $\mu_1,\dots, \mu_K$ and the covariance matrix $\Sigma_1$ is a  diagonal matrix with entries $\Sigma_{1, k},~k = 1,\dots,K$. The pseudo-covariance matrix $\Sigma_2$  is also diagonal and defined similarly.


The mean, covariance and pseudo-covariance for $z_k$ can be computed for a given distribution of measurement errors by use of Proposition~\ref{propMeanCovariacePseudocovariance} given in appendix. We obtain  the  mean as
 \begin{equation*}
     \mu_k = \mathbb E[z_k] =
     \begin{cases}
     u_k e^{j\theta_k} c_{\epsilon_{\theta,k}}(1)& \text{(voltage)}\\
     i_k e^{j(\theta_k+\phi_k)}c_{\epsilon_{\theta,k}}(1)c_{\epsilon_{\phi,k}}(1)& \text{(current)},
          \end{cases}
 \end{equation*}
 the variances as
\begin{align}\label{eq:sigma1}
    \Sigma_{1,k} 
  &=
      \begin{cases}
     (1-|c_{\epsilon_{\theta,k}}(1)^2|)u_k^2 + \sigma_{u,k}^2 & \text{(voltage)}\\[1ex]
     (1-|c_{\epsilon_{\theta,k}}(1)^2c_{\epsilon_{\phi,k}}(1)^2|)i_k^2 + \sigma_{i,k}^2& \text{(current)}
          \end{cases}
    \end{align}
and pseudo-variances as
\begin{align}\label{eq:sigma2}
    \Sigma_{2,k}\! 
    &=\! 
\begin{cases}
    e^{2j\theta_k}\left((u_k^2+\sigma_{u_k}^2)c_{\epsilon_{\theta,k}}(2) - u_k^2 c_{\epsilon_{\theta,k}}(1)^2\right)\!\!  & \text{(voltage)}\\[1ex]
    e^{2j(\theta_k+\phi_k)}\Big((i_k^2+\sigma_{i_k}^2)c_{\epsilon_{\theta,k}}(2)c_{\epsilon_{\phi,k}}(2)\\ \qquad - i_k^2 c_{\epsilon_{\theta,k}}(1)^2c_{\epsilon_{\phi,k}}(1)^2\Big) & \text{(current)},
\end{cases}     
\end{align}
 where $c_\epsilon(t) = \mathbb{E}[e^{jt\epsilon}]$ denotes the characteristic function of a random variable $\epsilon$.

The quality of the complex normal distribution approximation depends on the true distribution of $z$. For example, in the EM model, we assume that the measurement error of the measured magnitudes  and of the measured angle are  normally distributed. In such case for large values of the variance of the angle, the true distribution will have a banana shape, and the resulting approximation will be poor, while for small values of the variance of the angle, the approximation is good.

In Figure \ref{fig:datagenerators} we illustrate how the two models, a) the PMU measurement model and b) the EM measurement model, are implemented. In essence, we need to do more than just creating a measurement vector $y$, but also need preparation and calculation of the co-variance matrices that are required for the estimator, which we will describe in the subsequent section. For a real setting, we will receive the measurement vector $y$ and will have to calculate the covariance matrices based on the measurements and based on properties of the measurement device, see Section \ref{sect:assessment}. 

\begin{figure*}
        \centering
         \begin{tikzpicture}
             \node at (-1.6,0) [text width=1cm] (start) {$\sigma_I,\sigma_U$ $\sigma_{\theta},\sigma_{\phi}$};
             \node at (0,0) [rectangle,draw,thick,rounded corners, minimum width=1cm, minimum height=1cm, text centered]  (covar) {CoVar};
             \node at (-1.6,1.2) (x) {$x$};
             \node at (1.3,1.2) [rectangle,draw,thick,rounded corners, minimum width=1cm, minimum height=1cm, text centered]  (dx) {$Dx$};
             \node at (2.2,0) [rectangle,draw,thick,rounded corners, minimum width=1cm, minimum height=1cm, text centered]  (pmu) {PMU};
             \node at (4.5,0) [rectangle,draw,thick,rounded corners, minimum width=1cm, minimum height=1cm, text centered]  (DP) {Data Prep.};
            \draw[thick,->] (start) -- (covar);
            \draw[thick,->] (covar) -- (pmu)  node[draw=none,fill=none,midway,above] {$\Sigma_1,\Sigma_2$};
            \draw[thick,->] (x) -- (dx);
            \draw[thick,->] (dx) -| (pmu);
            \draw[thick,->] (pmu) -- (DP) node[draw=none,fill=none,midway,above] {$y_{PMU}$};
            \draw[thick,->] (DP) -- (6,0) node[draw=none,fill=none,above] {$z$};
            \draw[thick,->] (covar) -- (1.2,0)|- (6,-1) node[draw=none,fill=none,above] {$\Sigma_1,\Sigma_2$};
            \node at (2.35,0.2) [rectangle,draw,thin,dashed,rounded corners, minimum width=6.3cm, minimum height=3.2cm, text centered,label=(a) PMU Data Generator]  (Gen) {};
         \end{tikzpicture}
         \label{fig:pmu}
    \hfill
        \begin{tikzpicture}[node distance=1cm]
            \node at (-0.5,0) [text width=1cm] (start) {$\sigma_I,\sigma_U$ $\sigma_{\theta},\sigma_{\phi}$};
            \node at (-0.5,+1.2) (x) {$x$};
            \node at (3,-1.2)[rectangle,draw,thick,rounded corners, minimum width=1cm, minimum height=1cm, text width=1.5cm, text centered]  (covar) {CoVar \ref{sect:covar}};
             \node at (1.2,+1.2) [rectangle,draw,thick,rounded corners, minimum width=1cm, minimum height=1cm, text width=1cm, text centered]  (dx) {$Dx$ \ref{sec:model}};
             \node at (2,0) [rectangle,draw,thick,rounded corners, minimum width=1cm, minimum height=1cm, text width=1cm, text centered]  (em) {EM \ref{sec:EM}};
             \node at (4.3,0) [rectangle,draw,thick,rounded corners, minimum width=1cm, minimum height=1cm, text width= 1.5cm, text centered]  (DP) {Data Prep. \ref{sec.modmeas}};
            \draw[thick,->] (start) -- (em);
            \draw[thick,->] (start)--(0.8,0) |- (covar);
            \draw[thick,->] (covar) -- (6,-1.2) node[draw=none,fill=none,above] {$\Sigma_1,\Sigma_2$};
            \draw[thick,->] (x) -- (dx);
            \draw[thick,->] (dx) -| (em);
            \draw[thick,->] (em) -- (DP) node[draw=none,fill=none,midway,above] {$y_{EM}$};
            \draw[thick,->] (DP) -- (6,0) node[draw=none,fill=none,above] {$z$};
            \node at (2.8,0) [rectangle,draw,thin,dashed,rounded corners, minimum width=5cm, minimum height=4cm, text centered,label=(b) EM Data Generator]  (Gen) {};
        \end{tikzpicture}
         \label{fig:EM}
\caption{Data generation considering the (a) PMU  and (b) Electrical Meter (EM) models. The PMU blocks adds 2-dimensional Gaussian noise as measurement error to the selection of phasors according to the measurement locations identified by $D$. The parameters for the 2-D Gaussian noise are obtained from standard deviations of the measurement errors for the magnitudes of voltages and current. The same calculation is also done in the EM data generator, but without any impact on internal blocks; so these parameters are only given as output of the EM data generator, as they are later needed in the estimator, see next section. Note that the data preparation stage of the EM model also involves a pseudo-measurement for the voltage angles. }
\label{fig:datagenerators}
\end{figure*}
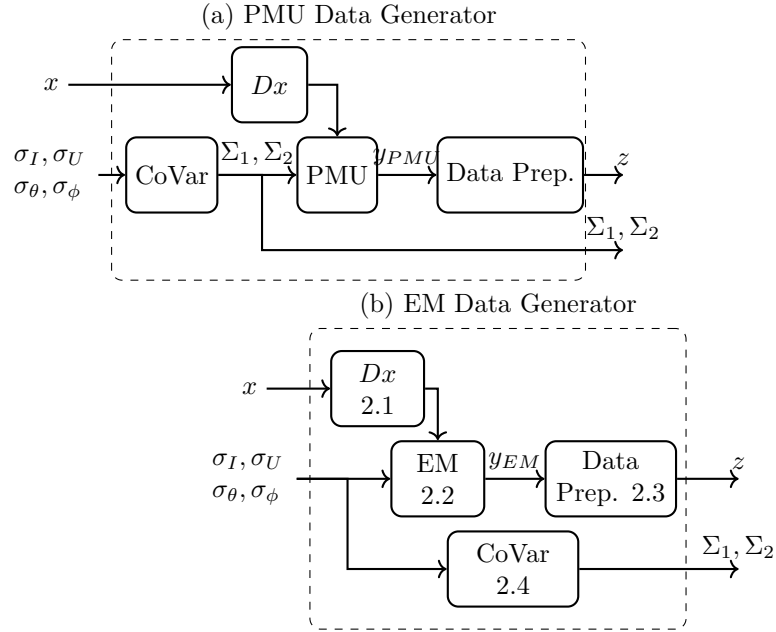

\section{Maximum Likelihood Estimation of Grid State}
\label{sect:estimator}
We now turn to deriving a maximum likelihood estimator for the entire grid state vector $x$ based on the phasor vector $z$ representing derived observations of a smaller number of directly observable state variables $Dx$. The methods for obtaining the maximum likelihood estimator and its properties are an extension of the methods used in \cite{HPSsgcomm2019}, but with a more general model and with additional results. In this section, the measurement model is given by
\begin{equation}\label{eq:modelDx}
z \sim CN(Dx,\Sigma_1,\Sigma_2).
\end{equation}
To do this, it is assumed that the linear constraint \eqref{eq:constraint} along with the noise covariance and pseudo-covariance matrices are available.

For generality,  we derive the estimator under slightly generalized assumptions. Specifically,  we generalize the constraint~\eqref{eq:constraint} to
\begin{equation}\label{eq:constraint_c}
Cx = c,
\end{equation}
and arbitrary complex values in $D$, $c$, $C$, $\Sigma_1$ and $\Sigma_2$, provided that the matrix 
\begin{equation*}
    P = \Sigma_1^*-\Sigma_2^H\Sigma_1^{-1}\Sigma_2
\end{equation*}
is positive semi-definite where superscript $H$ denotes complex conjugate transpose. Furthermore, we denote $W=\Sigma_2 (\Sigma_1^*)^{-1}$ as a convenient short notation, since both $W$ and $P$ appears in various places in the below results. 

For brevity, we formulate the estimator using complex augmented vectors and matrices. Given a general complex vector $v$ or matrices $B_1$ and $B_2$, their augmented versions are marked by an overbar and defined as
\begin{equation*}
    \bar v = \begin{pmatrix}
    v\\v^*
    \end{pmatrix}, \qquad 
    \overline{(B_1,B_2)} = \begin{pmatrix}
    B_1 & B_2\\
    B_2^* & B_1^*
    \end{pmatrix}.
\end{equation*}
In augmented notation the model \eqref{eq:modelDx} is written as
\[
\bar z \sim CN(\bar D \bar x, \bar \Sigma),
\]
where $\bar D = \overline{(D,O)}$, $\bar\Sigma = \overline{(\Sigma_1, \Sigma_2)}$, and $O$ denotes an appropriately sized matrix of zeros.

We estimate $x$ by maximum likelihood estimation, or equivalently least squares estimation, by minimising
\[
(\bar z - \bar D \bar x)^H \bar\Sigma^{-1} (\bar z - \bar D \bar x),
\]
subject to the constraints $c=Cx$. This is solved by minimising the Lagrange function
\begin{equation}\label{eq.lagrange}
\mathcal{L}(x,\lambda) = (\bar z - \bar D \bar x)^H \bar\Sigma^{-1} (\bar z - \bar D \bar x) + \Re(2\lambda^H (Cx-c)).
\end{equation}
This leads to the following result:
\begin{proposition}\label{prop.MLeq}
Assume \eqref{eq:modelDx} and \eqref{eq:constraint_c}. Then the maximum likelihood estimate is a solution to the equation 
\begin{equation}\label{eq.MLeq}
\begin{pmatrix} \bar g \\ \bar c \end{pmatrix}
= 
\begin{pmatrix} \bar G & \bar C^H \\ \bar C & O \end{pmatrix}
\begin{pmatrix} \bar x \\ \bar \lambda \end{pmatrix},
\end{equation}
where $\bar G = \overline{(G_1,G_2)}$, $\bar C = \overline{(C,O)}$,
$G_1 = 2D^H(P^*)^{-1}D$,
$G_2 = -2D^H(P^*)^{-1} W D^*$,
and
$
g = 2 D^H (P^*)^{-1} (z-Wz^*).
$
\end{proposition}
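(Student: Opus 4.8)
The plan is to read off \eqref{eq.MLeq} as the first-order optimality conditions of the constrained quadratic program that defines the estimator, after inverting the augmented covariance matrix explicitly. First I would note that, under \eqref{eq:modelDx}, the negative log-likelihood of $z$ equals $(\bar z-\bar D\bar x)^{H}\bar\Sigma^{-1}(\bar z-\bar D\bar x)$ up to an additive constant independent of $x$, so the maximum likelihood estimate is the minimiser of this quadratic form over the affine set $\{x:Cx=c\}$. Since the constraint is affine, a constraint qualification holds automatically, hence any minimiser $\hat x$ comes with a multiplier $\hat\lambda$ for which $(\hat x,\hat\lambda)$ is a stationary point of the Lagrangian $\mathcal L$ in \eqref{eq.lagrange}; and because the weight $\bar\Sigma^{-1}$ is Hermitian positive semidefinite under the stated assumption on $P$, the objective is convex and these stationarity conditions are also sufficient. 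It therefore remains to show that, treating $x,x^{*}$ and $\lambda,\lambda^{*}$ as independent variables, the Wirtinger stationarity equations $\partial\mathcal L/\partial x^{*}=0$ and $\partial\mathcal L/\partial\lambda^{*}=0$ coincide with \eqref{eq.MLeq}.

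The main computational step is the block inversion of $\bar\Sigma=\overline{(\Sigma_1,\Sigma_2)}$. Taking the Schur complement of $\bar\Sigma$ with respect to its lower-right block $\Sigma_1^{*}$ gives $\Sigma_1-\Sigma_2(\Sigma_1^{*})^{-1}\Sigma_2^{*}=P^{*}$ (using that $\Sigma_1$ is Hermitian and $\Sigma_2$ symmetric), so that $\bar\Sigma^{-1}=\overline{(Q_1,Q_2)}$ with $Q_1=(P^{*})^{-1}$ and $Q_2=-(P^{*})^{-1}\Sigma_2(\Sigma_1^{*})^{-1}=-(P^{*})^{-1}W$; the two remaining blocks are then forced by the augmented structure and a push-through identity. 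Writing $w=z-Dx$ so that $\bar z-\bar D\bar x=\bar w$, and carrying out the augmented matrix-vector products, the quadratic form collapses to $\bar w^{H}\bar\Sigma^{-1}\bar w=2\Re\!\big(w^{H}(P^{*})^{-1}w\big)-2\Re\!\big(w^{H}(P^{*})^{-1}Ww^{*}\big)$, the four scalar terms pairing into two conjugate pairs.

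Finally I would differentiate. The equation $\partial\mathcal L/\partial\lambda^{*}=0$ simply returns $Cx=c$, which is the lower block row of \eqref{eq.MLeq}. In the quadratic form, $x^{*}$ enters only through the factor $w^{*}=z^{*}-D^{*}x^{*}$ and through the conjugated factor $w^{H}$, because $\bar D\bar x=(Dx,D^{*}x^{*})^{T}$; differentiating term by term and recombining the conjugate pairs gives $\partial/\partial x^{*}\big[\bar w^{H}\bar\Sigma^{-1}\bar w\big]=-2D^{H}(P^{*})^{-1}\big[(z-Wz^{*})-(Dx-WD^{*}x^{*})\big]=-g+G_1x+G_2x^{*}$, with $g,G_1,G_2$ exactly as in the statement, while the constraint term contributes $C^{H}\lambda$. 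Setting $\partial\mathcal L/\partial x^{*}=0$ then yields $g=G_1x+G_2x^{*}+C^{H}\lambda$, and together with its complex conjugate this is the upper block row of \eqref{eq.MLeq}, finishing the argument.

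The step I expect to be most delicate is the Wirtinger differentiation of the augmented quadratic form: one must keep careful track of which occurrences of $w$ and $w^{*}$ carry an $x^{*}$ dependence, and then use that $(P^{*})^{-1}$ is Hermitian and $(P^{*})^{-1}W$ symmetric to merge the two conjugate contributions into the single clean factor $2D^{H}(P^{*})^{-1}$; in particular the internal sign is easy to get wrong, and it must combine so that the result reads $z-Wz^{*}$ and not $z+Wz^{*}$. The block inversion is otherwise routine, but one should double-check that it is $(P^{*})^{-1}$, not $P^{-1}$, that appears in the $(1,1)$ block, which is exactly why $P^{*}$ rather than $P$ surfaces throughout Proposition~\ref{prop.MLeq}.
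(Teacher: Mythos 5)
Your proposal is correct and follows essentially the same route as the paper: the paper's proof likewise sets the Wirtinger derivative $\partial\mathcal{L}/\partial x^{*}$ of the Lagrangian \eqref{eq.lagrange} to zero to obtain $-g+G_1x+G_2x^{*}+C^{H}\lambda=0$ and stacks this with the augmented constraint $\bar c=\bar C\bar x$. You simply spell out the intermediate computations (the Schur-complement inversion of $\bar\Sigma$ and the expansion of the augmented quadratic form) that the paper compresses into ``taking the derivative \dots and simplifying.''
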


\begin{proof}
Taking the derivative of the Lagrange function \eqref{eq.lagrange} with respect to $x^*$, and simplifying this, we obtain
\begin{equation}\label{eq.difflag}
\frac{\partial \mathcal{L}(x,\lambda)}{\partial x^*} = -g + G_1 x + G_2 x^* + C^H\lambda.
\end{equation}
Equating this to zero and using augmented matrices and vectors, we get
\[
\bar g = \bar G \bar x + \bar C^H\bar\lambda.
\]
Combining this with the constraint equation \eqref{eq:constraint_c} on augmented form, i.e. $\bar c = \bar C \bar x$, we obtain \eqref{eq.MLeq}.
\end{proof}

Throughout this paper we will assume that the matrix 
\begin{equation}\label{eq.A}
A=
\begin{pmatrix} \bar G & \bar C^H \\ \bar C & O \end{pmatrix}
\end{equation}
appearing in \eqref{eq.MLeq} is invertible. If this is not the case, then the solution to \eqref{eq.MLeq} is not unique. Even if it is not unique, the solution may still provided valuable information, e.g.\ some entries in the estimate may still be unique, but we will leave this issue for future study. When $A$ is invertible,
we can formulate the inverse of $A$ as a block matrix with blocks of the same size as $A$, i.e.\
\begin{equation}\label{eq.Ainv}
A^{-1} = \begin{pmatrix} \bar F_{11} & \bar F_{12} \\  \bar F_{21} & \bar F_{22} \end{pmatrix},
\end{equation}
and we can get a closed form solution for the maximum likelihood estimate of $x$. 

\begin{proposition}\label{prop.MLexpr}
Assume \eqref{eq:modelDx} and \eqref{eq:constraint_c}. If $A$ given by \eqref{eq.A} is invertible, the maximum likelihood estimate is given by
\[
\hat{\bar x} = \bar F_{11}\bar g + \bar F_{12} \bar c
\]
\end{proposition}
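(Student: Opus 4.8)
The plan is to obtain the closed form by directly inverting the linear system set up in Proposition~\ref{prop.MLeq}. By that proposition, any maximum likelihood estimate $\hat x$, together with an associated Lagrange multiplier $\hat\lambda$ for the constraint $Cx=c$, satisfies
\[
\begin{pmatrix}\bar g\\\bar c\end{pmatrix} = A\begin{pmatrix}\hat{\bar x}\\\hat{\bar\lambda}\end{pmatrix},
\]
with $A$ as in \eqref{eq.A}. Under the standing hypothesis that $A$ is invertible, I would left-multiply by $A^{-1}$, substitute the block partition \eqref{eq.Ainv}, and read off the first block row, which yields $\hat{\bar x} = \bar F_{11}\bar g + \bar F_{12}\bar c$ (and, as a by-product, $\hat{\bar\lambda} = \bar F_{21}\bar g + \bar F_{22}\bar c$). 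The pure linear algebra here is routine; what needs attention is that both sides of this manipulation are meaningful.

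The first point I would verify is that the formal product $A^{-1}\begin{pmatrix}\bar g\\\bar c\end{pmatrix}$ respects the augmented (conjugation-symmetric) structure, i.e.\ that its top block is genuinely of the form $(\hat x,\hat x^*)$ for an honest vector $\hat x\in\mathbb C^N$, so that $\hat x$, and hence the estimate, is well defined. This holds because $A$ is assembled entirely from augmented blocks $\overline{(\cdot,\cdot)}$: an augmented matrix maps augmented vectors to augmented vectors, an invertible augmented matrix has an augmented inverse (both commute with the conjugate-and-swap operator), and $\bar g=(g,g^*)$, $\bar c=(c,c^*)$ are augmented by construction. Consequently $(\hat{\bar x},\hat{\bar\lambda})$ is augmented, $\hat x$ is recovered unambiguously, and since $A$ is invertible this is the unique solution, so the maximum likelihood estimate is unique.

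The second point, which I would note for completeness since it underlies the appeal to Proposition~\ref{prop.MLeq}, is that a stationary point of the Lagrangian \eqref{eq.lagrange} together with feasibility is genuinely the constrained minimiser of the least-squares objective, not merely a critical point. This is where positive semidefiniteness of $P$ enters: $\bar\Sigma$ is Hermitian and, by the Schur-complement criterion, positive (semi-)definite exactly when $\Sigma_1$ and $P$ are, so $x\mapsto(\bar z-\bar D\bar x)^H\bar\Sigma^{-1}(\bar z-\bar D\bar x)$ is a convex quadratic, and for a convex objective under the affine constraint $Cx=c$ the stationarity conditions are necessary and sufficient for a global minimum. I expect the only real obstacle to be the bookkeeping in the first point — confirming that the block partition of $A^{-1}$ is compatible with the conjugation symmetry so that $\bar F_{11}\bar g+\bar F_{12}\bar c$ is a legitimate augmented vector; everything else is substitution and block-matrix inversion.
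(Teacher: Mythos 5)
Your proof takes the same route as the paper's: left-multiply the linear system \eqref{eq.MLeq} from Proposition~\ref{prop.MLeq} by the block-partitioned inverse \eqref{eq.Ainv} and read off the first block row. The two additional checks you supply --- that the augmented (conjugation-symmetric) structure is preserved under inversion, and that stationarity of the convex Lagrangian genuinely yields the constrained minimiser --- are sound and go beyond the one-line argument the paper gives, but they do not change the approach.
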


\begin{proof}
This follows directly from Proposition~\ref{prop.MLeq} by multiplying \eqref{eq.MLeq} by \eqref{eq.Ainv}.
\end{proof}
In principle closed form expressions can be obtained for $\bar{F}_{11}$, $\bar{F}_{12}$, etc.\ by using block inversion on $A$, but this requires $\bar{G}$ to be invertible, and a necessary requirement for this is that $D$ has full column rank. This is rarely fulfilled for the application in the present paper, and instead $A$ can be inverted numerically, and the submatrices can then be extracted from $A^{-1}$. Note that $A$ is a sparse matrix, which makes the numerical inversion faster. When $A$ is invertible, the following proposition gives the distribution of the maximum likelihood estimate. 

\begin{proposition}\label{prop.MLdist}
Assume \eqref{eq:modelDx} and \eqref{eq:constraint_c}. If A is invertible, then
\[
\hat{\bar x} \sim CN(\bar x, 2\bar F_{11}).
\]
\end{proposition}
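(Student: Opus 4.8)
The plan is to establish the distribution of $\hat{\bar x}$ directly from the closed-form expression in Proposition~\ref{prop.MLexpr}, namely $\hat{\bar x} = \bar F_{11}\bar g + \bar F_{12}\bar c$. Since $\bar c$ is deterministic and $\bar g$ is an affine image of $\bar z$, the estimate $\hat{\bar x}$ is an affine function of the complex Gaussian vector $\bar z$, and hence is itself complex Gaussian; it therefore suffices to compute its mean and its augmented covariance matrix. First I would write $\bar g$ explicitly as an affine function of $\bar z$: from the definition $g = 2D^H(P^*)^{-1}(z - Wz^*)$ one sees that $\bar g = \bar G \bar D^{-\text{(left)}}\cdots$ — more precisely, a short computation shows $\bar g = M \bar z$ for the matrix $M = \overline{(2D^H(P^*)^{-1}, -2D^H(P^*)^{-1}W)}$, and one should check that $M = \bar G \bar D^{\dagger}$-type identities are consistent, or simply carry $M$ along. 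Then $\hat{\bar x} = \bar F_{11} M \bar z + \bar F_{12}\bar c$.

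For the mean, I would use that $\mathbb E[\bar z] = \bar D \bar x$ together with the fact that the true $x$ satisfies the constraint $Cx = c$, i.e.\ $\bar C \bar x = \bar c$. Plugging $\mathbb E[\bar z] = \bar D\bar x$ into $\hat{\bar x} = \bar F_{11}\bar g + \bar F_{12}\bar c$ and recalling that, by Proposition~\ref{prop.MLeq}, the pair $(\bar x, \bar\lambda)$ with $\bar\lambda = 0$-component stripped off solves \eqref{eq.MLeq} when $z$ is replaced by its mean (since then $\bar g \to \bar G\bar x$ and $\bar c = \bar C\bar x$), multiplying \eqref{eq.MLeq} by $A^{-1}$ gives back $\bar x$ in the first block. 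Hence $\mathbb E[\hat{\bar x}] = \bar x$, so the estimator is unbiased. This is the step that makes essential use of the constraint being satisfied by the true state.

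For the covariance, since $\bar c$ contributes nothing to the variance, $\mathrm{Cov}(\hat{\bar x}) = \bar F_{11} M \bar\Sigma M^H \bar F_{11}^H$. The key simplification is to show $M\bar\Sigma M^H = \bar G$. This is where the block structure of $\bar\Sigma = \overline{(\Sigma_1,\Sigma_2)}$ and the definitions $P = \Sigma_1^* - \Sigma_2^H\Sigma_1^{-1}\Sigma_2$ and $W = \Sigma_2(\Sigma_1^*)^{-1}$ enter: one computes the $2\times 2$ block product and checks that the $(1,1)$ block collapses to $2D^H(P^*)^{-1}D = G_1$ and the $(1,2)$ block to $-2D^H(P^*)^{-1}WD^* = G_2$, using standard Schur-complement identities for augmented complex-normal covariance matrices. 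Given $M\bar\Sigma M^H = \bar G$, we get $\mathrm{Cov}(\hat{\bar x}) = \bar F_{11}\bar G \bar F_{11}^H$. Finally I would use the identity obtained from the first block-row of $A A^{-1} = I$, namely $\bar G \bar F_{11} + \bar C^H \bar F_{21} = I$ together with $\bar C \bar F_{11} = O$ (from the second block-row applied to the first block-column, since the off-diagonal block of $A$ is $O$), to show $\bar F_{11}\bar G\bar F_{11}^H = \bar F_{11}$; concretely, $\bar F_{11}\bar G\bar F_{11} = (I - \bar F_{12}\bar C)\bar F_{11} = \bar F_{11}$ because $\bar C\bar F_{11} = O$ and using symmetry of $A^{-1}$ inherited from symmetry of $A$. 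Then $\mathrm{Cov}(\hat{\bar x}) = \bar F_{11}$, and since the augmented covariance of a $CN$ vector is twice the ``$\bar F_{11}$-style'' block in the convention used here, we arrive at $\hat{\bar x}\sim CN(\bar x, 2\bar F_{11})$.

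The main obstacle I anticipate is the purely algebraic identity $M\bar\Sigma M^H = \bar G$ and the companion step $\bar F_{11}\bar G\bar F_{11}^H = \bar F_{11}$: both require careful bookkeeping of augmented complex matrices, conjugates, and the Schur-complement relations among $\Sigma_1$, $\Sigma_2$, $P$, $W$, and one must be attentive to Hermitian-transpose versus ordinary-transpose and to the factor-of-two conventions in the definition of $G_1$, $G_2$, $g$. The symmetry properties of $A$ (so that $\bar F_{21} = \bar F_{12}^H$ etc.) should be invoked explicitly to close the covariance computation cleanly.
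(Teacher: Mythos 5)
Your route is essentially the paper's: write $\bar g = M\bar z$ with $M = \overline{(2D^H(P^*)^{-1}, -2D^H(P^*)^{-1}W)}$, observe that $\hat{\bar x}$ is a widely linear transform of $\bar z$, get unbiasedness from $M\bar D = \bar G$ together with the block identities $\bar F_{11}\bar G + \bar F_{12}\bar C = I$ and $\bar C\bar F_{11} = O$ extracted from $A^{-1}A = I$, and collapse the covariance using $\bar F_{11}\bar G\bar F_{11} = \bar F_{11}$. All of that matches the paper's proof step for step.

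The one genuine flaw is in how you obtain the factor of $2$. The correct identity is $M\bar\Sigma M^H = 2\bar G$, not $\bar G$: writing $B_1 = 2D^H(P^*)^{-1}$ and $B_2 = -B_1 W$, the $(1,1)$ block of $M\bar\Sigma M^H$ is $(B_1\Sigma_1 + B_2\Sigma_2^*)B_1^H + (B_1\Sigma_2 + B_2\Sigma_1^*)B_2^H$; the second summand vanishes because $\Sigma_2 - W\Sigma_1^* = O$, and the first equals $2D^H(P^*)^{-1}(\Sigma_1 - W\Sigma_2^*)B_1^H = 2D^H B_1^H = 4D^H(P^*)^{-1}D = 2G_1$, the extra $2$ coming from the second copy of the prefactor $2$ in $B_1^H$. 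With $M\bar\Sigma M^H = 2\bar G$ and $\bar F_{11}(2\bar G)\bar F_{11} = 2\bar F_{11}$ the augmented covariance is $2\bar F_{11}$ outright. Your patch --- asserting that ``the augmented covariance of a $CN$ vector is twice the $\bar F_{11}$-style block in the convention used here'' --- is not a convention of this paper: in the notation $\bar z\sim CN(\bar\mu,\bar\Sigma)$ the second argument \emph{is} the augmented covariance, so there is no extra doubling to appeal to, and your computation as written would deliver $CN(\bar x, \bar F_{11})$, which is wrong. The rest of your bookkeeping (using Hermitian symmetry of $A$, hence of $A^{-1}$, to identify $\bar F_{11}^H$ with $\bar F_{11}$) is fine and is used implicitly in the paper as well.
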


\begin{proof}
First notice that $\bar g = \bar J \bar z$, where 
\[
\bar J = 
\overline{(2D^H(P^*)^{-1},-2D^H(P^*)^{-1}W)}
\]
and thus from Proposition~\ref{prop.MLexpr}, we get that $\hat{\bar x} = \bar F_{11}\bar J\bar z + \bar F_{12} \bar c$ is a widely linear transformation of $\bar z$. Hence from the distributional assumptions on $z$, i.e. \eqref{eq:modelDx}, we obtain that
\[
\hat{\bar x} \sim CN(\bar F_{11}\bar J \bar D\bar x + \bar F_{12}\bar c, \bar F_{11}\bar J \bar\Sigma \bar J^H \bar F_{11}).
\]
To simplify the expectation, observe that expanding the identity $A^{-1}A=I$ into its blocks, we obtain
\begin{equation}\label{eq.AAI}
\bar F_{11}\bar G + \bar F_{12}\bar C = I \quad \text{and} \quad \bar C \bar F_{11} = O.
\end{equation}
The first equation in \eqref{eq.AAI} together with the observation that $\bar J \bar D = \bar G$ implies that 
\[
\mathbb{E}\hat{\bar x} = F_{11}\bar J \bar D\bar x + \bar F_{12}\bar c = (F_{11}\bar G + \bar F_{12}\bar C)\bar x = \bar x.
\]
To simplify the augmented covariance matrix, firstly observe that combining both equations in \eqref{eq.AAI}, we obtain that
\begin{equation}\label{eq.FGFF}
\bar F_{11} \bar G \bar F_{11} = \bar F_{11}.
\end{equation}
Secondly, expanding and simplifying, we get that
\begin{equation}\label{eq.JSJG}
\bar J \bar \Sigma \bar J^H = 2\bar G.
\end{equation}
From \eqref{eq.FGFF} and \eqref{eq.JSJG}, it immediately follows that the covariance matrix of $\hat{\bar x}$ is given by $2\bar F_{11}$.
\end{proof}

We note that Proposition~\ref{prop.MLdist} implies that the maximum likelihood estimator is unbiased. Furthermore, it shows that the estimator is efficient, i.e. has a minimal covariance matrix, according to the following proposition, which gives the Cramer-Rao lower bound.

\begin{proposition}\label{prop.CRLB}
Assume \eqref{eq:modelDx} and \eqref{eq:constraint_c}. If A is invertible, then
\[
\text{Cov}(\tilde{\bar x}) \succeq 2\bar F_{11},
\]
where $\succeq$ means that $\text{Cov}(\tilde{\bar x}) - 2\bar F_{11}$ is positive semi-definite, and $\tilde{\bar x}$ is an arbitrary unbiased estimator of $\bar x$.
\end{proposition}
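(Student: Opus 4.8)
The plan is to derive Proposition~\ref{prop.CRLB} from the constrained Cramer-Rao bound and then to verify by linear algebra that this bound equals the covariance $2\bar F_{11}$ already exhibited in Proposition~\ref{prop.MLdist}. First I would remove the constraint \eqref{eq:constraint_c} by reparametrizing its (affine) solution set: fix a particular solution $x_0$ (for instance the true state) and a matrix $V$ of full column rank whose columns span $\mathrm{null}(C)$, so that an admissible $x$ is precisely one of the form $x = x_0 + V\beta$ with $\beta$ free. In augmented notation this reads $\bar x = \bar x_0 + \bar V\bar\beta$ with $\bar V = \overline{(V,O)}$, and one has $\mathrm{range}(\bar V) = \mathrm{null}(\bar C)$ and $\bar C\bar V = O$. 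Substituting into \eqref{eq:modelDx} turns the problem into the \emph{unconstrained} widely-linear complex Gaussian location model $\bar z \sim CN(\bar D\bar x_0 + (\bar D\bar V)\bar\beta,\bar\Sigma)$ with free parameter $\bar\beta$.

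For this unconstrained model the Fisher information for $\bar\beta$ is $\mathcal I_{\bar\beta} = (\bar D\bar V)^H\bar\Sigma^{-1}(\bar D\bar V)$, and using $\bar G = 2\bar D^H\bar\Sigma^{-1}\bar D$ (read off from the definitions of $G_1,G_2$ together with the block form of $\bar\Sigma^{-1}$) this equals $\tfrac12\,\bar V^H\bar G\bar V$. Its invertibility, equivalently that of $\bar V^H\bar G\bar V$, follows from the assumed invertibility of $A$: since $\bar G = 2\bar D^H\bar\Sigma^{-1}\bar D \succeq O$, a null vector $b$ of $\bar V^H\bar G\bar V$ would force $\bar D\bar V b = O$ and hence, using $\bar C\bar V = O$, produce the null vector $(\bar V b,\,O)$ of $A$. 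The classical Cramer-Rao bound then says any unbiased estimator of $\bar\beta$ has augmented covariance $\succeq \mathcal I_{\bar\beta}^{-1} = 2(\bar V^H\bar G\bar V)^{-1}$, and the constrained Cramer-Rao bound (in the form of Gorman and Hero, Marzetta, or Stoica and Ng, transcribed to widely-linear complex estimation) upgrades this to $\mathrm{Cov}(\tilde{\bar x}) \succeq \bar V\,\mathcal I_{\bar\beta}^{-1}\,\bar V^H = 2\,\bar V(\bar V^H\bar G\bar V)^{-1}\bar V^H$ for every estimator $\tilde{\bar x}$ unbiased on the constraint set; the bias-gradient form of that bound, together with $\bar C\bar V = O$, is precisely what lets $\tilde{\bar x}$ be an arbitrary unbiased estimator (not one forced onto the constraint surface) and makes the bound independent of the choice of $V$.

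It then remains to prove the identity $\bar F_{11} = \bar V(\bar V^H\bar G\bar V)^{-1}\bar V^H$, which is pure linear algebra on the blocks of $A^{-1}A = AA^{-1} = I$ (cf.\ \eqref{eq.AAI}, \eqref{eq.FGFF}). From $\bar C\bar F_{11} = O$ the columns of $\bar F_{11}$ lie in $\mathrm{range}(\bar V)$, and taking the conjugate transpose of $\bar F_{11}\bar C^H = O$ (another block of $A^{-1}A = I$) shows the same for the columns of $\bar F_{11}^{H}$; hence $\bar F_{11} = \bar V\bar N\bar V^H$ for some square $\bar N$. Right-multiplying the identity $\bar F_{11}\bar G + \bar F_{12}\bar C = I$ by $\bar V$ and using $\bar C\bar V = O$ gives $\bar F_{11}\bar G\bar V = \bar V$, i.e.\ $\bar V\bar N(\bar V^H\bar G\bar V) = \bar V$; left-multiplying by a left inverse of $\bar V$ yields $\bar N(\bar V^H\bar G\bar V) = I$, so the square matrix $\bar V^H\bar G\bar V$ is invertible with inverse $\bar N$, which is the claimed identity. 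Combining it with the previous paragraph gives $\mathrm{Cov}(\tilde{\bar x}) \succeq 2\bar F_{11}$, and Proposition~\ref{prop.MLdist} shows the maximum likelihood estimator attains this bound.

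The main obstacle is not a single hard step but the careful bookkeeping of the complex/augmented conventions. Getting the factor of $2$ in $2\bar F_{11}$ exactly right requires tracking how the $\tfrac12$ in the augmented complex-normal log-density interacts with the real-to-augmented change of variables in the Fisher information; and reaching the full ``arbitrary unbiased estimator'' statement genuinely needs the \emph{constrained} Cramer-Rao bound, since a naive reparametrization argument only controls $\mathrm{Cov}(\tilde{\bar x})$ along directions tangent to the constraint set and leaves the cross terms with infeasible directions uncontrolled.
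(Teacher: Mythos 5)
Your proof is correct, and it rests on the same pillar as the paper's argument --- a constrained Cramer--Rao bound expressed through a basis of the null space of the constraint gradient --- but the route from that bound to $2\bar F_{11}$ is genuinely different. The paper takes the Jagannatham--Rao form $U(U^H I(\bar x)U)^{-1}U^H$ with $U$ an orthonormal basis of the orthogonal complement of the row space of $\bar C$, proves $\mathrm{rank}(\bar F_{11})=2N-2Q$ from the block identities \eqref{eq.AAI} so that $\bar F_{11}$ and $U$ share a column space, and then invokes the substitution argument of Gorman and Hero together with a Moore--Penrose pseudoinverse computation and \eqref{eq.FGFF} to collapse the bound to $2\bar F_{11}$. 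You instead reparametrize the constraint set, compute the reduced Fisher information $\tfrac12\bar V^H\bar G\bar V$ (your identification $\bar G=2\bar D^H\bar\Sigma^{-1}\bar D$ is consistent with the definitions of $G_1$ and $G_2$, and the factors of $2$ come out right), and prove the explicit identity $\bar F_{11}=\bar V(\bar V^H\bar G\bar V)^{-1}\bar V^H$ directly from $\bar C\bar F_{11}=O$, $\bar F_{11}\bar C^H=O$ and $\bar F_{11}\bar G+\bar F_{12}\bar C=I$. That identity is the sharper statement: it renders the paper's rank count and pseudoinverse manipulations unnecessary, and it gives an independent characterization of $\bar F_{11}$ as the inverse of $\bar G$ restricted to $\mathrm{null}(\bar C)$; your observation that invertibility of $\bar V^H\bar G\bar V$ follows from invertibility of $A$ is also a cleaner version of the paper's rank argument. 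What you give up is self-containedness at exactly the same place the paper does: the passage from ``unbiased estimator of the reduced parameter'' to ``arbitrary unbiased estimator of $\bar x$'' is outsourced to the constrained-CRB literature in both proofs, and your closing caveat about directions transverse to the constraint set correctly identifies why the naive reparametrization alone would not suffice. One detail worth making explicit if you write this up: the block identity $\bar F_{11}\bar C^H=O$, which you need for the factorization $\bar F_{11}=\bar V\bar N\bar V^H$, is not among the two identities the paper records in \eqref{eq.AAI}; it does hold (it is the $(1,2)$ block of $A^{-1}A=I$, and alternatively follows from the Hermitian symmetry of the covariance matrix $2\bar F_{11}$ in Proposition~\ref{prop.MLdist}), so state it and its provenance explicitly.
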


\begin{proof}
By Theorem 1 in \cite{jagannathamrao04}, the Cramer-Rao lower bound in the complex valued and constrained case is given by
\[
\text{Cov}(\tilde{\bar x}) \succeq U(U^H I(\bar x) U)^{-1} U^H,
\]
where $I(\bar x)$ is the Fisher information matrix, and $U$ is a $2N\times(2N-2Q)$ matrix fulfilling 
\begin{equation}\label{eq.U}
U^T U = I \quad \text{and} \quad \left(\frac{\partial}{\partial \bar x} h(\bar x)\right) U = O
\end{equation}
and $h(\bar x)=0$ is the constraint. 

Since $h(\bar x)=\bar C\bar x - \bar c$, the latter part of \eqref{eq.U} becomes $\bar CU=O$, and thus the column space of the matrix $U$ is the orthogonal complement of the row space of $\bar C$, since $A$ is invertible and therefore $\text{rank}(\bar C)=2Q$ implying $\text{rank}(U)=2N-2Q$, i.e.\ $U$ has full rank. Now observe that since the $2N\times 2N$ matrix $\bar F_{11}$ fulfills the second equation in \eqref{eq.AAI}, $\bar F_{11}$ has the same column space, provided $\text{rank}(\bar F_{11})=2N-2Q$. We verify the rank: Firstly, by the second equation in \eqref{eq.AAI}, we get that $\text{rank}(\bar F_{11}) \leq 2N-2Q$. Secondly, by the first equation in \eqref{eq.AAI}, we get that
\[
2N \leq  \text{rank}(\bar F_{11} \bar G) + \text{rank}(\bar F_{12}\bar C) 
\leq \text{rank}(\bar F_{11}) + \text{rank}(\bar C)
\]
so $\text{rank}(\bar F_{11}) \geq 2N-2Q$. Since $U$ and $\bar F_{11}$ thus have the same column space, we can use derivations similar to those immediately following (24) in \cite{gormanhero90} to get that 
\begin{equation}\label{eq.CFFFF}
\text{Cov}(\tilde{\bar x}) \succeq \bar F_{11}(\bar F_{11}^H I(\bar x) \bar F_{11})^+ \bar F_{11}^H,
\end{equation}
where we have first exchanged the inverse with a Moore-Penrose pseudoinverse shown as a $^+$. 

By similar derivations as is used in \eqref{eq.difflag} for the first order derivatives of the Lagrange function, the second order derivative of the log likelihood function can be obtained, and from this we can obtain the Fisher information 
\[
I(\bar{x}) = E\left(-\frac{\partial}{\partial \bar x}\frac{\partial}{\partial \bar x^H} \log l\right) = \frac{1}{2}\bar G.
\]
Inserting this into \eqref{eq.CFFFF} and using \eqref{eq.FGFF}, we get that 
\begin{equation*}
\text{Cov}(\tilde{\bar x}) \succeq \bar 2F_{11}(\bar F_{11}^H \bar G \bar F_{11})^+ \bar F_{11}^H = 2F_{11}\bar F_{11}^+ \bar F_{11}^H = 2\bar F_{11}.
\end{equation*}
\end{proof}

We can obtain confidence intervals and ellipses from Proposition~\ref{prop.MLdist} in the following way. Let $F_1$ and $F_2$ denote blocks in $\bar{F}_{11}=\overline{(F_1,F_2)}$.
Then we get by Proposition~\ref{prop.MLdist} that the covariance matrix for $\hat x_i$ (formulated as a two-dimensional real vector instead of a complex number) is given by
\begin{equation}\label{eq.covmatxhati}
\begin{pmatrix}
\Re((F_1)_{ii} + (F_2)_{ii}) &
\Im(-(F_1)_{ii} + (F_2)_{ii}) \\
\Im((F_1)_{ii} + (F_2)_{ii}) &
\Re((F_1)_{ii} - (F_2)_{ii}) 
\end{pmatrix}.
\end{equation}
From this, it follows that a confidence interval with confidence level $1-\alpha$ for the real and imaginary parts of $x_i$ are given by
\begin{align*}
&\Re(\hat x_i) \pm z_{\alpha/2} \sqrt{\Re((F_1)_{ii} + (F_2)_{ii})},\\
&\Im(\hat x_i) \pm z_{\alpha/2} \sqrt{\Re((F_1)_{ii} - (F_2)_{ii})}.
\end{align*}
Furthermore it follows that a confidence ellipse for $x_i$ is given by the ellipse with 
\begin{itemize}
    \item center $\hat x_i$, 
    \item angle $\tan^{-1}(v_2/v_1)$,
    \item major axis $\sqrt{e_1\chi_{1-\alpha,2}^2}$, and    
    \item minor axis $\sqrt{e_2\chi_{1-\alpha,2}^2}$,
\end{itemize}
where $e_1$ is the largest eigenvalue of \eqref{eq.covmatxhati}, $e_2$ is the smallest eigenvalue, and $(v_1,v_2)$ is the eigenvector corresponding to the largest eigenvalue.

\section{Assessment Approach}
\label{sect:assessment}

Figure \ref{fig:assessmentapproach} shows the details of how the estimator is applied and assessed.  We use a real-life distribution grid with smart meter measurements that determine the load scenario in a load flow calculation to determine the 'true' voltage and current phasors everywhere in the grid. This load-flow calculation together with the structural data and the load data is called the Reference Grid, see Figure \ref{fig:assessmentapproach}. The load-flow calculation in the  RGM is based in a standard Newton-Raphson algorithm to solve the power flow equations, see e.g. \cite{saadat1999power}.

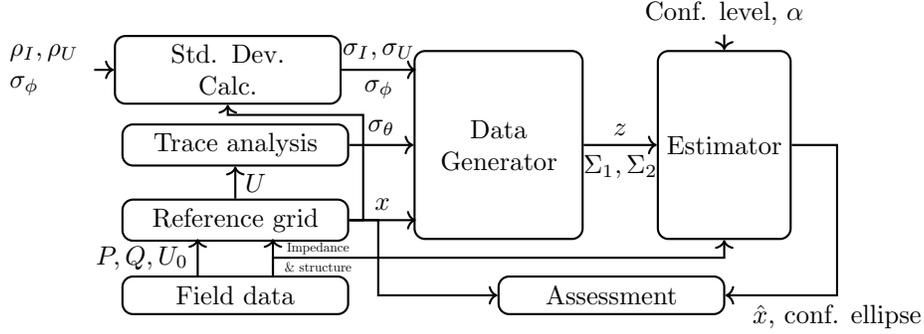
\begin{figure*}
    \centering
    \begin{tikzpicture}
        \node at (5,0) [rectangle,draw,thick,rounded corners, minimum width=1cm, minimum height=2.5cm, text centered]  (es)  {Estimator};
        \node at (5,1.75) (conf) {Conf. level, $\alpha$};
        \draw[thick,->] (conf) -- (es);
        \node at (2,0) [rectangle,draw,thick,rounded corners, minimum width=2cm, minimum height=2.5cm, text centered,text width=2cm]  (mg)  {Data Generator};
        \node at (3.5,-2) [rectangle,draw,thick,rounded corners, minimum width=3cm, minimum height=0.5cm, text centered,text width=2cm]  (ass)  {Assessment};
        \node at (-1.5,-2) [rectangle,draw,thick,rounded corners, minimum width=3cm, minimum height=0.5cm, text centered,text width=2cm]  (fd)  {Field data};
        \node at (-1.5,-1) [rectangle,draw,thick,rounded corners, minimum width=3cm, minimum height=0.5cm, text centered,text width=2.5cm]  (rg)  {Reference grid};
        \node at (-1.5,0) [rectangle,draw,thick,rounded corners, minimum width=3cm, minimum height=0.5cm, text centered,text width=2.5cm]  (ta)  {Trace analysis};
        \node at (-1.6,1) [rectangle,draw,thick,rounded corners, minimum width=3cm, minimum height=0.5cm, text centered,text width=2cm]  (ec)  {Std. Dev. Calc.};
        \draw[thick,->] (rg)-- (ta) node[draw=none,fill=none,midway,right] {$U$};
        \draw[thick,->] (ta)-- (mg) node[draw=none,fill=none,midway,above] {$\sigma_{\theta}$};
        \draw[thick,->] (ec)-- (0.9,1) node[draw=none,fill=none,midway,above] {$\sigma_I,\sigma_U$}node[draw=none,fill=none,midway,below] {$\sigma_{\phi}$};
        \draw[thick,->] (mg)-- (es) node[draw=none,fill=none,midway,below] {$\Sigma_1,\Sigma_2$}node[draw=none,fill=none,midway,above] {$z$};
        \draw[thick,->] (-2,-1.75)-- (-2,-1.25) node[draw=none,fill=none,midway,left] {$P,Q,U_0$};
        \draw[thick,->] (-1,-1.75)-- (-1,-1.25) ;
        \draw[thick,->] (-1,-1.75)-- (-1,-1.5)-|(es) node[draw=none,fill=none,at start,xshift=0.6cm,yshift=-0.02cm,above,scale=0.5] {Impedance } node[draw=none,fill=none,at start,xshift=0.6cm,below,scale=0.5] {\& structure} ;
        \draw[thick,->] (rg)-- (0.9,-1) node[draw=none,fill=none,midway,above] {$x$};
        \draw[thick,->] (rg)-| (0.2,0.4)-|(ec);
        \draw[thick,->] (rg)-- (0.4,-1)|- (ass);
        \draw[thick,->] (es)-- (6.5,0) |- (ass) node[draw=none,fill=none,midway,below] {$\hat{x}$, conf. ellipse};
        \node at (-4,1) [text width=1cm] (start) {$\rho_I,\rho_U$ $\sigma_{\phi}$};
        \draw[thick,->] (start) -- (ec);
    \end{tikzpicture}
    \caption{Overview of the assessment approach: the field measurements of active and reactive power, the measured voltage at the LV side of the substation and the case study grid structure  are used as input to a reference grid model in order to derive the true state vector $x$ of the grid. A subset of the components of $x$ are then used to generate measurements in the data generator according to the PMU or EM model, see Figure \ref{fig:datagenerators}. The output of the Data Generator is then channeled into the estimator and its results are compared to the original state vector $x$.}
\label{fig:assessmentapproach}
\end{figure*}

The true state vector $x$ is then used by the Data Generator to generate measurements according to the PMU or EM model, shown in Figure \ref{fig:datagenerators}. The preprocessed data $z$ obtained from the measurements is then fed into the estimator from Section \ref{sect:estimator}. 

The estimator provides the estimated state vector $\hat{x}$ together with the confidence ellipses. This output is then compared with the true state vector $x$. The actual metrics for comparison will be provided in the next section together with the results.

The data generator requires as input the standard deviations of the voltage magnitude, of the current magnitude, of the phase angle, and of the voltage angle. The standard deviation of the phase angle, $\phi$, between voltage and current phasors has been stated in the literature as $\sigma_\phi=10^{-2}~rad$ \cite{10,11}, so this value is used in the case study later.

 The standard deviation of the voltage angle, $\sigma_\theta$, is in our assessment approach obtained by calculating the grid scenario for a set of realistic load conditions and calculating the empirical standard deviation of the samples of $\theta$ over all grid locations and over all load conditions. By using actual measurement data as realistic load conditions we get $\sigma_{\Theta}= 0.003~rad$ later in the case study. 

The errors on voltage magnitude and current magnitude are specified by relative errors, $\rho_U$ and $\rho_I$. For voltages, the relative error is defined in reference to the nominal voltage, which is $U_{nominal}=400V$ later in the case study. For currents, the relative error is in relation to the true current magnitude at each measurement location, i.e. the calculation of the standard deviation of the measurement error for current magnitudes will depend on the actual current magnitude obtained from the true state vector $x$ in the assessment approach in Figure \ref{fig:assessmentapproach}.  The standard deviations $\sigma$ of error for voltage magnitudes and current magnitudes are chosen in a way, that $\beta=99\%$ of the erroneous samples are within a fraction $\rho$ of the nominal value (for voltages), respectively the true value $\mu$ (for currents), i.e.
\begin{equation}\label{eq:prsd}
Pr(X \in (\mu(1-\rho),\mu(1+\rho)))=\beta.
\end{equation}
For the magnitudes of voltages and currents later in the case study, we use respectively, $\rho_U=1\%$ and $\rho_I=3\%$, \cite{8,review2}. 
Using the $(1-\beta)/2=0.5\%$ quantiles of the normal distribution, we 
get $\sigma_U$ and $\sigma_I$ from: 
$$\sigma_U r_0 = \mu_U\rho_U,\qquad \sigma_I r_0 = \mu_I\rho_I$$
with $r_0=norminv((1+\beta)/2,0,1)$.\par

Note that in the general case, different values of $\rho_U$ and $\rho_I$ can be used for different measurement locations, e.g. reflecting different types of measurement devices, so these relative errors could be vectors. In the case study of the next section, we assume the use of identical measurement devices at the different locations, so the scalars, $\rho_U$ and $\rho_I$ introduced above are used at every measurement location. 
´

\section{Case-Study Introduction}
\label{sect:usecase}
To validate the methods we apply the assessment approach on a case study using a real-life Danish distribution grid \cite{N2DG_D5.3}.

\subsection{Grid Scenario}
 The grid covers a typical small sized Danish town, supplying energy to several households, a school, a church, and some local industry. The grid consists of a secondary 10kV:400V substation with 98 connected customers with varying load characteristics. The information about grid structure and cable types was obtained through automatic processing from the Geographic Information System at the DSO \cite{shahid21} and an automated mapping of cable types to cable parameters was realized based on available cable data sheets. Using interfaces to the automatically derived digital twin of the low-voltage grid \cite{N2DG_D5.3}, the $C$ matrix was derived, which contains the system equation sets in Equation (\ref{eq:constraint}). 

Figure \ref{fig:tmetopology} shows the grid topology as a graph; since there are no cross-connections between the feeders, a tree topology results. The root of the tree is the low-voltage substation busbar of the secondary transformer. Intermediate nodes represent junction boxes. The leaves of the tree are the Customer Connection Boxes (CCBs), which represent the handover points from the grid operator to the customer and are measured by Smart Meters in the given grid. Although the Smart Meters measure voltages and currents per-phase in this 3-phase grid, we here want to avoid to deal with the issue of wrong phase assignment and therefore we use a single-phase representation of this 3-phase grid and we use the appropriate formulas to do a phase aggregation of the measurements.

\begin{figure}
\centering
  \includegraphics[width=\linewidth]{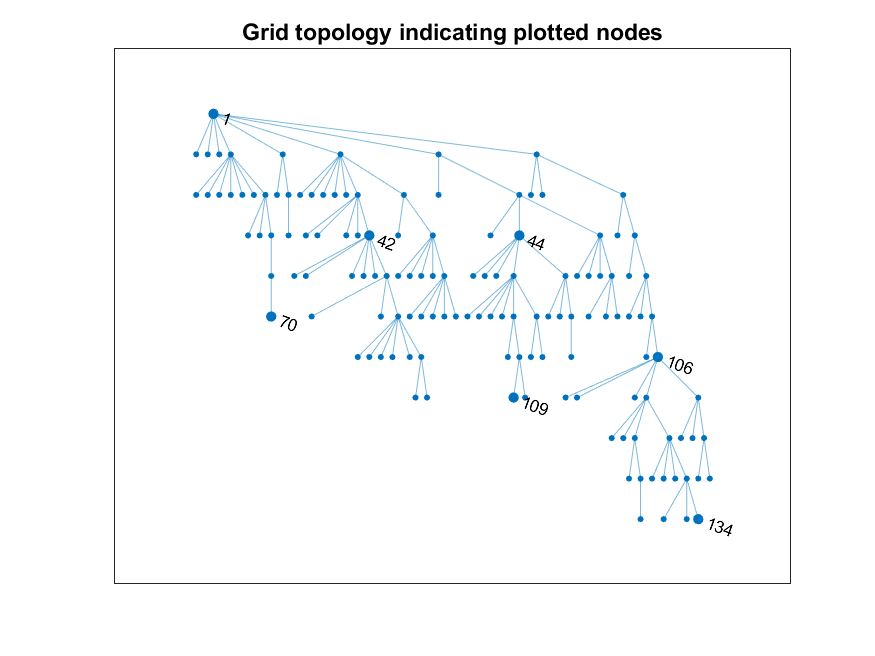}
  \caption{The low-voltage grid topology used in the case-study derived from  a real-life Danish 400V grid. The numbered nodes are used later for plots.}
  \label{fig:tmetopology}
\end{figure}

\subsection{Measurement locations}
The base case for this paper is the actual real-life measurement scenario in which meters measure the voltages at the CCBs as well as the currents into these CCBs. In the PMU model (used for validation of the theory and as a comparison case) the meters measure the actual voltage and current phasors. In reality, the meters follow the EM model, see Section \ref{sect:model}, i.e. they measure voltage magnitude, current magnitude and local phase angles between currents and voltages. The intermediate nodes in the grid topology, Junction Boxes (JB), are not measured in the measurement scenario (and neither in the real-life grid). The LV side of the secondary substation is measured in the real-life grid, but these measurements are only used for the reference grid model.

The smart meter technology deployed in the case study grid allows to collect data in cycles of 6 hrs, with a time resolution of 15 minute intervals. The magnitudes of the voltage and currents are measured, as well as the local phase angle $\phi$. We assume for simplicity that all measurements have successfully been aligned in time, while typically some clock inaccuracies up to 9 seconds are allowed, \cite{IEEEstdC37237}, which would lead to additional errors in the measurand, see \cite{Imad21,ImadACM}.

\begin{table}
\centering
\caption{List of true voltage and current phasors at the specific points in the grid. These will be used later for assessment.}\label{tab:truePhasors}
\begin{tabular}{l|l|l} \hline
Entity& Voltage Phasor (V)& Current Phasor (A)\\ \hline
Subst.&406.60+0.00i & 77.16-11.30i\\
SM-70&406.04-0.08i & 1.68-0.62i\\
SM-109&399.82+1.65i & 1.42-0.10i\\
SM-134&400.00-2.17i & 0.06+0.08i\\
JB-42&405.98-0.27i & 2.72+0.34i\\
JB-44&401.06+1.40i & 21.65-11.79i\\
JB-106&400.88-1.90i & 22.67+1.12i\\
 \end{tabular}
\end{table}

These assumed true values of voltage and current phasors, shown for chosen example locations in the topology in Table \ref{tab:truePhasors}, are later used as comparison base for the estimator and they are also input to the measurement models. The relative measurement errors, $\rho_U$ and $\rho_I$, for voltages and currents are derived from the measurement device type (e.g. measurement device class) as described in the previous section. The standard deviations of voltage magnitude error and current magnitude error are then calculated according to Equation \eqref{eq:prsd}. 
These values as well as a reduced set of true voltages and currents are used as input for the Data Generators, which create a noisy variant of the 'true' voltages and current phasors from the RGM according to the selected meter model (PMU or EM), see Section \ref{sect:assessment}.

\section{Case Study Results}\label{sec:casestudy}
We now apply the assessment approach to the grid and measurement scenario as described in the previous sections. Purposes of the case study are: (1) to validate the calculations and their implementation; (2) to investigate to what extent the approximations via pseudo-measurements (Sect. \ref{sec.modmeas}) and by the 'banana-shape' approximation (Sect. \ref{sect:covar}) for the EM model affect the accuracy of the confidence regions of the estimator; (3) to gain insights into the behaviour of confidence regions for the model estimation results for a practically relevant example case.  

Purpose (1) is achieved via applying the assessment approach using the PMU data generator to derive a so-called hit-rate metric, which is introduced in the next subsection. Reason to use the PMU Data Generator is that it exactly fulfils the theoretical assumptions of the estimator. 
Purpose (2) is addressed by an assessment of the hit-rate metric for the EM Data Generator, and its comparison to the PMU results. Purpose (3) is addressed for the EM Data Generator as practical measurement deployments in the use-case and in almost all of today's distribution grids follow the EM model.

Two metrics are used for the validation and assessment: 1) the average hit-rate, where the hit-rate is defined by the fraction of times that the estimated confidence ellipse from the estimator includes the true value, averaged over all grid nodes (for voltages) respectively grid lines (for currents):
\[HR_k:= \frac{1}{R}\sum_{r=1}^R I(x_{k,r} \subseteq ce_{k,r}) \]
The hit-rate for node $k$, $HR_k$ is averaged over $R$ repetitions, which we then average for all $K$ measurement nodes to a system average hit-rate $HR$.

2) In order to quantify the convergence, we also look at deviation of the hit-rate estimator as follows: 
$$Dev.HR_k=UB\{HR_{k}\}-LB\{HR_{k}\}$$
where $Dev.HR_k$ expresses the deviation between the 95\% upper (UB) and lower (LB) bounds of the hit-rate estimate for node $k$ over $R$ repetitions. Then, for a simpler handle of the metric, we average over all nodes to get the average hit rate deviation:

$$Dev.HR:=\frac{1}{K}\Sigma_k^K Dev.HR_k.$$

\subsection{Validation - Model impact on hit-rates for the PMU Data Generator}
\label{sect:validation}

The introduced grid estimation approach provides not only the estimates of voltage and current phasors but also confidence ellipses for the obtained phasors.  In order to validate the calculation of the confidence ellipses, the  grid scenario from the use-case is fixed and ground truth is obtained,  and a large number of $R=50000$ repetitions of sets of erroneous preprocessed observations are generated by the PMU data generator, and the estimation approach is calculated for each of these $R$ preprocessed observation vectors. 

For the PMU Data generator, the hit-rate (HR) should stochastically converge to the value of $\alpha=95\%$, with $\alpha$ chosen arbitrarily to reflect a 95\% confidence in the estimate, when the error model follows a 2D-Gaussian model.


The left half of the result in Table \ref{tab:hitratesfortwometermodels} confirms the convergence of the hit-rates to $\alpha=95\%$ for the PMU Data Generator. The first row use a baseline assumption on the standard deviations of errors on voltages, currents, and local phase angle, while subsequent rows increase or decrease one of the error standard deviations by a factor of 10, keeping the other two on their baseline value. These standard deviations are used in the Data Generator and the derived covariance matrices $\Sigma_1$ and $\Sigma_2$ are channeled as input into the estimator; therefore the estimator is made aware of the change of the standard deviations in the Data Generators.


\begin{table*}
\centering
\caption{Hit-rate assessment using N=50000 repetitions: results are shown for both Data Generators, PMU in the middle and EM in the right; the different rows show different variations of the base case (first numerical row) of standard deviations of the error added to the measurements.}\label{tab:hitratesfortwometermodels}
\rotatebox{-90}{
\begin{tabular}{|c|cc|cc|cc|cc|cc|}
 \hline
 \multicolumn{3}{|c|}{Measurement Error STDs}&\multicolumn{4}{|c|}{PMU Data Generator}&\multicolumn{4}{|c|}{EM Data Generator}\\ \hline
 \multicolumn{1}{|c|}{Voltage}&\multicolumn{2}{|c|}{Currents} & \multicolumn{2}{|c|}{Voltage}&\multicolumn{2}{|c|}{Currents}&\multicolumn{2}{|c|}{Voltage}&\multicolumn{2}{|c|}{Currents}\\ \hline
  $\sigma^2_{u,n}$& $\sigma^2_{i,n}$&$\sigma^2_{\phi,n}$& avg HR & Dev.HR. & avg. HR.  & Dev.HR.& avg. HR. & Dev.HR. & avg. HR. & Dev.HR.\\ 
  
  [$V^2$]& [$A^2$]&[$rad^2$]& [\%] & [\%] & [\%] & [\%]& [\%] & [\%] & [\%] & [\%]\\ \hline
 2.41 & 2.1$\cdot 10^{-3}$ & 1$\cdot 10^{-4}$ & 94.92 & 0.39 & 94.99 & 0.38 & 94.00 & 0.42 & 95.36 & 0.36 \\ \hline
 10x & . & .  & 95,02 & 0,38 & 95,00 & 0,38 & 87,54 & 0,58 & 95,33 & 0,37\\ 
 . & 10x & .  & 95,12 & 0,38 & 94,99 & 0,38 & 93,86 & 0,38 & 94,45 & 0,40\\ 
 . & . & 10x  & 95,09 & 0,38 & 94,99 & 0,38 & 74,70 & 0,43 & 92,05 & 0,41\\ \hline
 0.1x & . & . & 95,04 & 0,38 & 94,99 & 0,38 & 99,89 & 0,41 & 95,35 & 0,37\\
 . & 0.1x & . & 95,05 & 0,38 & 95,00 & 0,38 & 94,07 & 0,41 & 96,02 & 0,34\\
 . & . & 0.1x & 95,04 & 0,38 & 95,00 & 0,38 & 93,77 & 0,41 & 92,33 & 0,40
 \\ 
 \hline
\end{tabular}}
\end{table*}
\normalsize

\subsection{Assessment of hit-rate when using the EM Data Generator}
When the assessment uses the EM data generator, there are two approximations that deviate from the theoretical and accurate case: (1) the Pseudo-measurement for the voltage angle, which is taken in the data preparation in the EM data generator (Sect. \ref{sec.modmeas}); (2) the assumption in the estimator of Gaussian noise, which is only approximately true, see Sect. \ref{sect:covar}. These two approximations will influence the hit-rates of the confidence regions.

The right half of Table \ref{tab:hitratesfortwometermodels} shows the impact of these two approximations, which depends on parameters of the measurement errors. In addition the grid scenario will influence the impact of the pseudo-measurements, so here the results have to be interpreted in the context of the realistic distribution grid of the use-case. 

The base case of measurement error standard devations, which is derived from realistic measurement device parameters, is specified in the top numerical row. For this base case, the hit-rates for voltage and current confidence regions are very close to the theoretical value of $95\%$, the voltage hit rate is $1\%$ lower while the current hit-rate is slightly higher than $95\%$. When decreasing the standard deviations of the voltage or of the current, see third and second row from the bottom, respectively, the hit-rates stay at the base level approximately or even increase slightly. When decreasing the standard deviation of the local phase angle error between voltage and currents, both hit-rates decrease slightly - the latter is suspected to be an impact of the pseudo-measurements for voltage angles, which are suspected to take over more strongly when the standard deviation of the local phase angle error decreases.

When increasing the standard deviations by a factor of 10 individually (numerical rows 2 to 5 in table), hit-rates for voltages and currents both decrease, however to a different extent. Note that a factor of 10 on the standard deviations corresponds approximately to measurement errors of 
$10\%-30\%$ (or even much more for current magnitudes, as the actual smaller values of the current magnitudes further out in the grid are not calculated in here when using a constant standard deviation), which is typically beyond practical relevant measurement deployments, so this is an extreme case to demonstrate the impact.

When increasing the standard deviation of the voltage magnitude error by a factor of 10, hit-rates for current estimator condidence regions are only marginally affected, while the hit-rates for the voltage estimator drop by almost $7\%$. When increasing the standard deviation of the current magnitude errors, we see the inverse situation: Voltage estimator hit-rates are only marginally reduced while the impact on current estimator hit-rate becomes noticable, while still being less than $1\%$ reduced.
The increase of the standard deviation of the error on the local phase angle has the strongest impact: hit-rate of voltage phasors drop by almost $20\%$ and hit-rates of current-phasors drop by a bit more than $3\%$ compared to the base-case.

In summary, the realistic base case leads to hit-rates very close to the theoretical $95\%$ while extreme scenarios of huge measurement errors may reduce these hit-rates strongly. The practical applicability of the novel approach to EM measurements in realistic settings is confirmed by the analysis.

\subsection{Voltage and current estimates under the EM model}
In the subsequent analysis the EM data generator is used with the relative errors according to the base-case parameters (so from now on, standard deviations of errors on the current magnitude are in fact proportional to the true current magnitude of the currents at that measurement location, in contrast to the analysis in the previous section).

Figures \ref{fig:volt_est_meters} to \ref{fig:current_est_cb} show the estimated voltage or current phasors (visualized by a square) in one repetition in comparison to the true value (diamond) and the estimated confidence ellipse. The first figures show the estimates at customer connection boxes, which also provide measurements (visualized by plus symbol) as input to the estimation approach. For the example results, we chose meters with IDs $\{70, 109, 134\}$, see Figure \ref{fig:tmetopology} for grid location. 
The measurements of voltages in Figure \ref{fig:volt_est_meters} are actually all lying on the x-axis, since the EM metering model does not allow to measure absolute phase angles - instead the phase angle $\theta$ of the voltage phasor is assumed to be 0 in the measurement. Figure \ref{fig:current_est_meters} shows the current estimates with confidence ellipses for the same meters.

\begin{figure*}
\minipage{0.32\textwidth}
  \includegraphics[width=1\columnwidth]{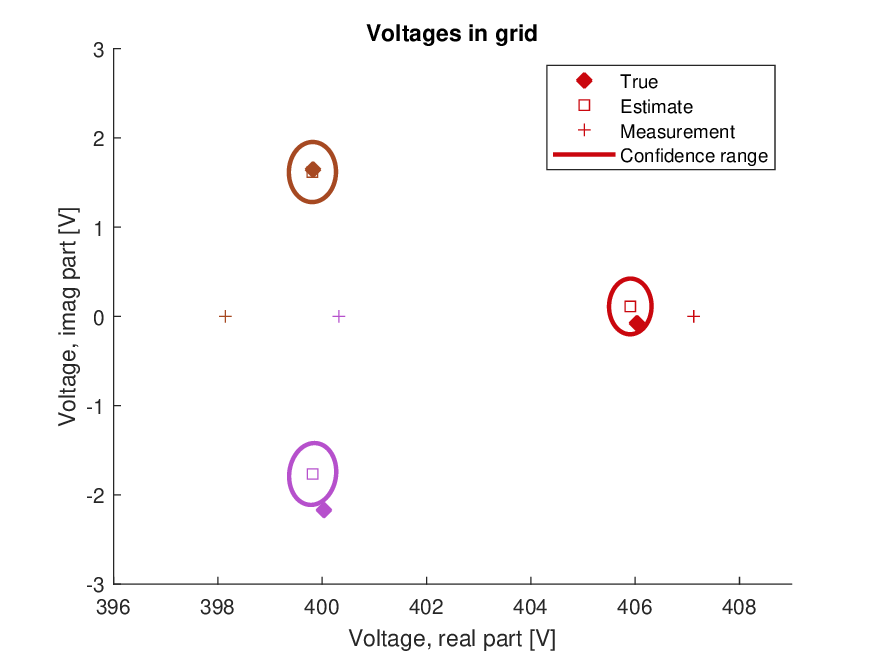}
  \caption{Voltage estimates, measurements, true values and confidence regions for selected customer connection boxes (providing measurements).}
  \label{fig:volt_est_meters}
\endminipage\hfill
\minipage{0.32\textwidth}
  \includegraphics[width=1\columnwidth]{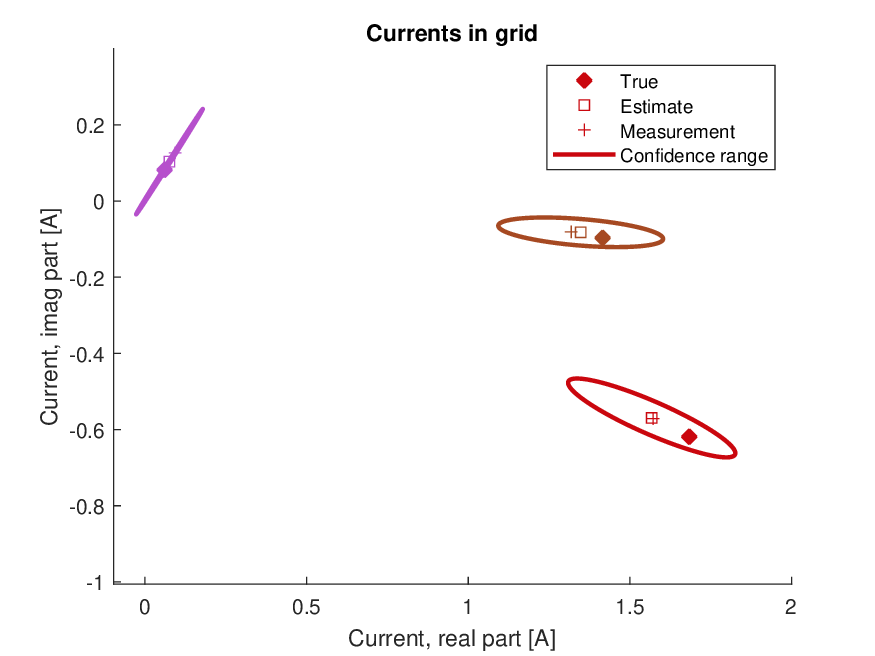}
  \caption{Current estimates, measurements, true values and confidence regions for selected customer connections (providing measurements).}
  \label{fig:current_est_meters}
\endminipage\hfill
\minipage{0.32\textwidth}%
  \includegraphics[width=1\columnwidth]{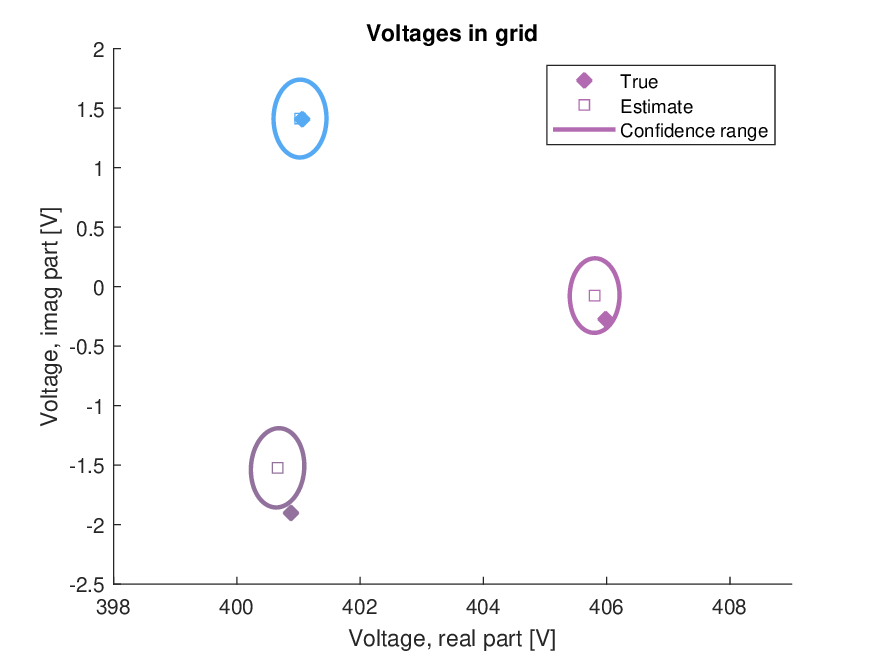}
  \caption{Voltage estimates, true values and confidence regions for selected junction boxes where measurements are not available.}
  \label{fig:volt_est_cb}
\endminipage
\end{figure*}



Figures \ref{fig:volt_est_cb} and \ref{fig:current_est_cb} show a similar result for intermediate junction boxes, with ID's $\{42,44,106\}$, which do not provide any measurements to the estimation approach.


\begin{figure}
    \centering
  \includegraphics[width=0.8\columnwidth]{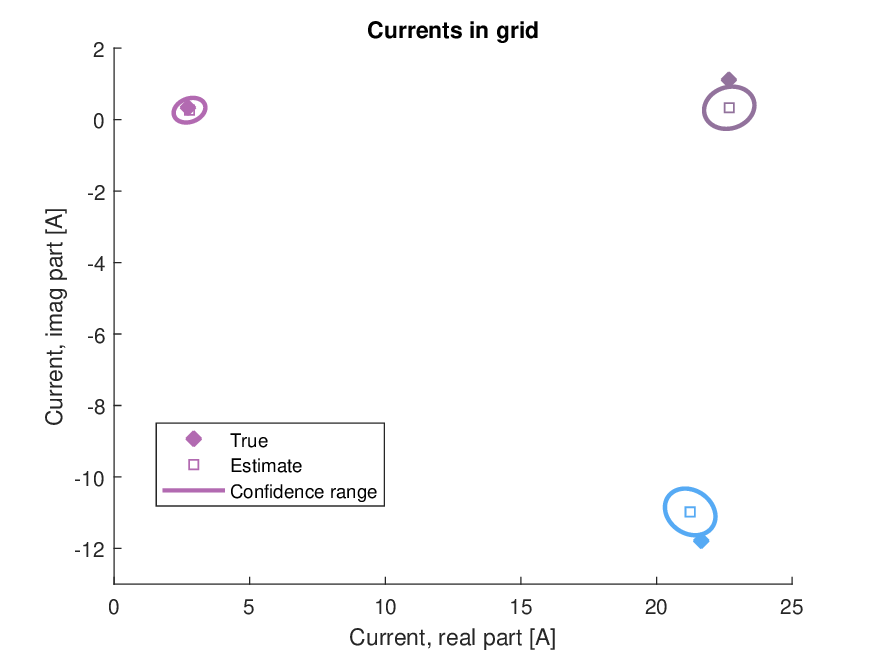}
  \caption{Current estimates, true values and confidence regions at lines into selected junction boxes where measurements are not available. 
  }
  \label{fig:current_est_cb}
\end{figure}




\subsection{Analysis of noise on confidence regions}
In grid monitoring we are ultimately interested in estimation of voltage and current magnitudes, and possibly on phase angles. In order to assess how these estimation results are impacted by different measurement error magnitudes, we define the $95\%$ confidence range for voltage magnitudes as the difference of the largest and smallest voltage magnitude on the confidence ellipse $\Delta C$, indexed either for voltage (U) or current (I): 
$$\Delta C_U =\max_{ConfEllipse} |U| - \min_{ConfEllipse} |U|.$$
We assess in the following, how these confidence ranges are impacted by changing error magnitudes.

\subsubsection{Change of error standard deviation on measurements of voltage magnitude}

We first change the standard deviation, $\sigma_U$ of the normally distributed error on the voltage magnitude. Figure \ref{fig:varVarVMagOnVoltage} shows that the confidence range for the estimated voltage magnitude for all 7 considered grid locations changes almost linearly with $\sigma_u$ within the considered range of the voltage measurement error standard deviation between 1V and 4V. Furthermore, the difference of confidence ranges between different grid locations is small - and reduced further for larger $\sigma_U$. Although being measured by Smart Meters as input to the estimator, Customer Connection Boxes (dashed) show a slightly larger confidence range for the voltage as compared to junction boxes (non-measured, dotted) and the substation itself (also not measured for the estimation, solid).

\begin{figure}
\centering
  \includegraphics[width=0.8\columnwidth]{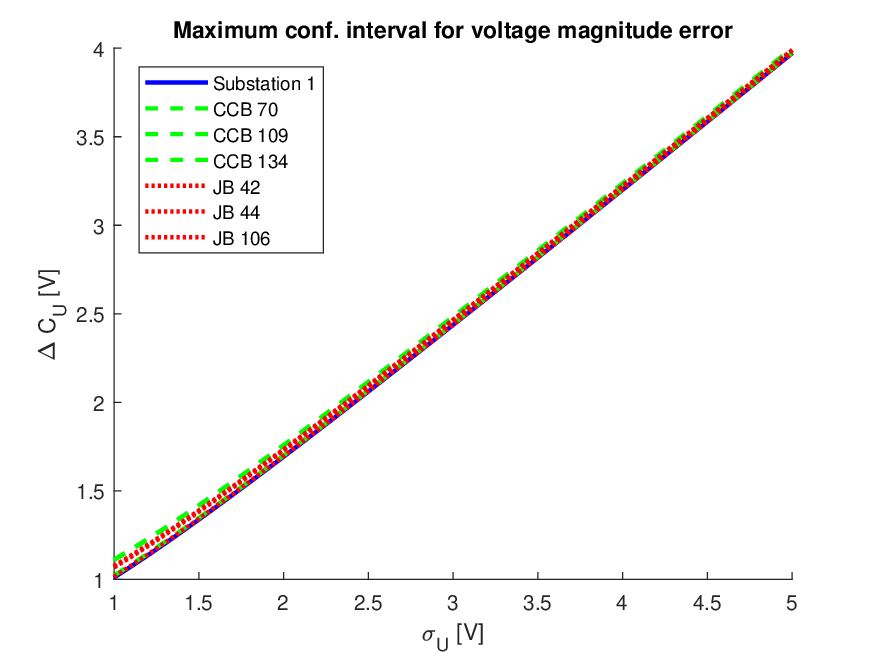}
  \caption{Impact of voltage estimation on selected nodes for increasing noise on voltage magnitude. }
  \label{fig:varVarVMagOnVoltage}
\end{figure}

A very different behavior is obtained for the confidence range of the current magnitudes, shown in Figure \ref{fig:varVarVMagOnCurrent}: Varying the standard deviation of the voltage measurements in the range between 1V and 4V only has negligible impact on the confidence ranges of the estimated currents, the shown curves are almost horizontal lines. In contrast to the voltage magnitudes, the confidence ranges of the current magnitudes however depends strongly on the grid location (here the line feeding into the substation bus bar, JB or CCB).
Furthermore, the different CCBs show different confidence ranges for the current magnitudes, partly resulting from the quite different true current phasors, see Table \ref{tab:truePhasors}.

\begin{figure}
\centering
  \includegraphics[width=0.8\columnwidth]{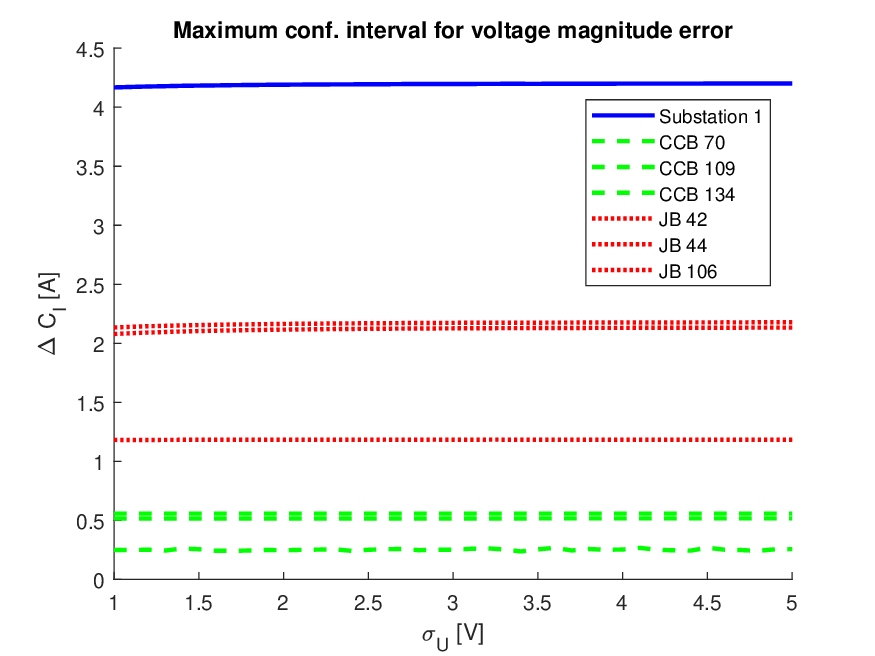}
  \caption{Impact of current estimation on selected nodes for increasing noise on voltage magnitude.}
  \label{fig:varVarVMagOnCurrent}
\end{figure}







\subsubsection{Change of error standard deviation on measurement of current magnitude}

For this set of experiments, we move away from a fixed relative error on the current magnitude measurements across the measured CCBs, and instead set a fixed standard deviation, $\sigma_I$ of the error on current magnitude across all measured grid lines. Figure  \ref{fig:varVarIMagOnVoltage} shows the impact on the confidence range of the voltage magnitude estimates, while Figure \ref{fig:varVarIMagOnCurrent} shows the impact on the confidence range of the estimated current magnitudes. Compared to varying the error standard deviation on voltage magnitudes in the previous subsection, the impact is more widely varying depending on inspected grid location. 

\begin{figure}
    \centering
  \includegraphics[width=0.8\columnwidth]{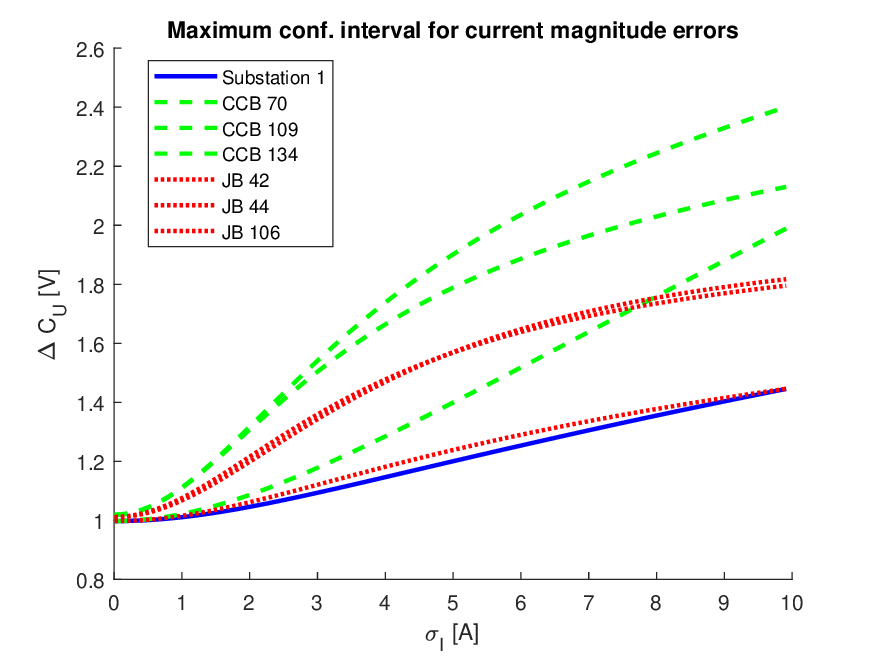}
  \caption{Impact of voltage estimation on selected nodes for increasing noise on current magnitude measurements.}
  \label{fig:varVarIMagOnVoltage}
\end{figure}

\begin{figure}
    \centering
  \includegraphics[width=0.8\columnwidth]{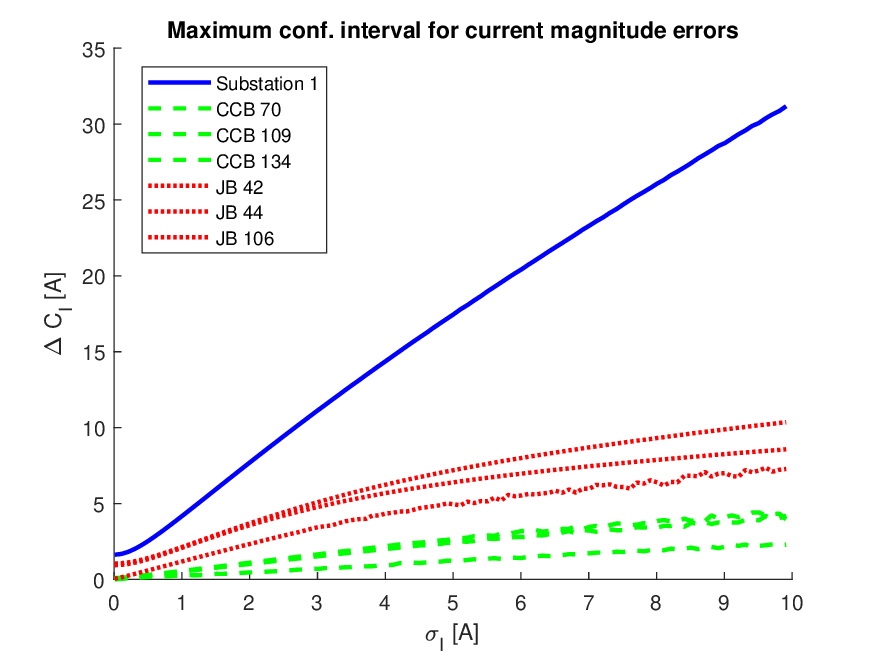}
  \caption{Impact of current estimation on selected nodes for increasing noise on current magnitude measurements.}
  \label{fig:varVarIMagOnCurrent}
\end{figure}

\subsubsection{Change of error standard deviation on measurement of local phase angles}
When varying the error magnitude on the local phase angle measurement (here a range of $\sigma_\phi$ up to approximately double of its value in the base scenario is applied), results are not visibly impacted for the parameter ranges. Within the specific range of $\sigma_\phi\in[0,0.018]$ rad confidence ranges for voltage and current estimates do not change within four digits accuracy.


%
%

We also investigated the confidence range of voltage and current magnitudes when visualizing over the grid topology graph. However, as the confidence ranges for voltage magnitudes show only little differences across different grid locations, and the confidence ranges for current magnitudes are largely influenced by the true current magnitude, the results are not interesting to be shown here. However, this will change when changing the measurement scenario, e.g. assuming some CCBs as not being measured, or adding the measurement at the substation. The latter studies have been started and will be published in subsequent work.

\section{Conclusion and Outlook}
 \label{sect:conclusions}

In this paper we presented a methodology to derive and assess confidence regions of estimated voltage and current phasors in an electrical distribution grid. Typical measurement devices in low and medium voltage grids are not clock synchronized to high precision level and therefore only allow to measure local phase angles. Therefore, the paper introduced the EM metering model, which is subsequently compared to a PMU metering model. While the PMU metering model fulfills the theoretical assumptions for the estimator, two approximations need to be introduced for the EM metering model: (1) constant zero-valued pseudo measurements on the phase angle of voltages; (2) approximations of the measurement error by 2-D Gaussian noise. 
Furthermore, not all desirable grid locations may be observed by measurement devices in distribution grids. 

In a realistic use-case of a single-phase representation of a real distribution grid, we showed that the estimation and related confidence ellipses show the theoretically expected stochastic convergence for the PMU model, while there are some deviation from the specified confidence level for a more realistic EM measurement model. However, these deviations are only significant in case of extremely large measurement errors.  A confidence range was used to asses the impact of increasing measurement errors on measured voltage magnitudes, current magnitudes and phase angle measurements, respectively, and example results from the realistic use-case have been reported.

Further reduced sets of measurement locations, e.g. in which some customers do not have smart meters, or some smart meters only measure voltages (which are less privacy sensitive), and also additional measurement device locations at the secondary substation will be investigated in future case studies. In this context, the generation of extended types of pseudo-measurements and the derivation of properties of the measurement error of such pseudo-measurements will also need to be studied. In addition, a characterization and inclusion of errors in the structural information and cable properties of the grid will be interesting. Finally, the extension of the use-case study to use a 3-phase plus neutral model will be interesting; the presented mathematics hold as the grid equations stay linear \cite{florinpaper}, while additional complexity results from the wrong-phase assignment cases of measurements of voltages and currents.

Recent technological developments with 5G and beyond, and the installment of these communication interfaces in modern smart meters, enables a higher accurate clock synchronization among the meters. For example \cite{Nokia_clocksynch5G_21} discusses the requirements and use of Precision Time Protocols (PTP) to achieve clock synchronization as low as $\pm1.5\mu$sec. for smart grid applications, and \cite{Patel21} concludes in 5G end-to-end systems clock synchronization can within a 5G-TSN network be kept within boundaries of 900ns. 5G enabled smart meters are, however, at this stage not widely deployed, but the industrial trend moves towards using such networks in next generation smart meters, hereby enabling applications with high requirement for clock synchronization such as DSSE. The advancement of technology may lead to the partial availability of voltage phasor measurements. Use-cases of heterogeneous measurement types, in the terminology of this paper, some measurement locations with PMU meters, others with EM meters, will therefore be interesting for future studies.

 \bibliographystyle{elsarticle-num} 
  \bibliography{ref.bib}

\begin{thebibliography}{10}
\expandafter\ifx\csname url\endcsname\relax
  \def\url#1{\texttt{#1}}\fi
\expandafter\ifx\csname urlprefix\endcsname\relax\def\urlprefix{URL }\fi
\expandafter\ifx\csname href\endcsname\relax
  \def\href#1#2{#2} \def\path#1{#1}\fi

\bibitem{karthik2021}
K.~Nainar, C.-I. Ciontea, K.~Shahid, F.~Iov, R.~Olsen, C.~Sch{\"a}ler,
  H.~Schwefel, Experimental validation and deployment of observability
  applications for monitoring of low-voltage distribution grids, Sensors
  21~(17) (2021).
\newblock \href {https://doi.org/10.3390/s21175770}
  {\path{doi:10.3390/s21175770}}.

\bibitem{DSOGriplanningpaper}
C.~Schaeler, K.~Strasser, R.~Damb{\"o}ck, H.~Schwefel, Increased renewable
  hosting capacity of a real low-voltage grid based on continuous measurements:
  Results from an actual pv connection request, in: R.~Adler, A.~Bennaceur,
  S.~Burton, A.~{Di Salle}, N.~Nostro, R.~Olsen, S.~Saidi, P.~Schleiss,
  D.~Schneider, H.-P. Schwefel (Eds.), Dependable Computing - EDCC 2021
  Workshops, Communications in Computer and Information Science, Springer,
  Germany, 2021, pp. 90--98, dependable Computing - EDCC 2021 Workshops ;
  Conference date: 13-09-2021 Through 13-09-2021.
\newblock \href {https://doi.org/10.1007/978-3-030-86507-8_9}
  {\path{doi:10.1007/978-3-030-86507-8_9}}.

\bibitem{yonghao}
Y.~Gui, J.~D. Bendtsen, J.~Stoustrup, Coordinated control of pv inverters in
  distribution grid using local and centralized control, in: IECON 2020 The
  46th Annual Conference of the IEEE Industrial Electronics Society, 2020, pp.
  1773--1778.
\newblock \href {https://doi.org/10.1109/IECON43393.2020.9254512}
  {\path{doi:10.1109/IECON43393.2020.9254512}}.

\bibitem{1}
M.~E. Baran, A.~W. Kelley, State estimation for real-time monitoring of
  distribution systems, IEEE Transactions on Power systems 9~(3) (1994)
  1601--1609.

\bibitem{3}
K.~Li, State estimation for power distribution system and measurement impacts,
  IEEE Transactions on Power Systems 11~(2) (1996) 911--916.

\bibitem{review5}
K.~Dehghanpour, Z.~Wang, J.~Wang, Y.~Yuan, F.~Bu, A survey on state estimation
  techniques and challenges in smart distribution systems, IEEE Transactions on
  Smart Grid 10~(2) (2018) 2312--2322.

\bibitem{DSOGriLosses}
H.-P. Schwefel, R.~Olsen, J.~Rasmussen, N.~Silva, On the impact of measurement
  errors on loss calculations in distribution grids, in: Proceedings of the
  15th European Dependable Computing Conference (EDCC), Workshop on Dependable
  Solutions for Intelligent Electricity Distribution Grid (DSOGRI), Springer,
  2019, pp. 159--164.

\bibitem{ImadACM}
I.~Antonios, H.-P. Schwefel, Assessing the impact of measurement artifacts on
  energy loss calculation in electrical distribution grids, in: Proceedings of
  ACM E-Energy 2023, 2023.

\bibitem{observability}
S.~Bhela, V.~Kekatos, S.~Veeramachaneni, Enhancing observability in
  distribution grids using smart meter data, IEEE Transactions on Smart Grid
  9~(6) (2017) 5953--5961.

\bibitem{ratio}
F.~F. Wu, Power system state estimation: a survey, International Journal of
  Electrical Power \& Energy Systems 12~(2) (1990) 80--87.

\bibitem{review1}
A.~Primadianto, C.-N. Lu, A review on distribution system state estimation,
  IEEE Transactions on Power Systems 32~(5) (2016) 3875--3883.

\bibitem{review3}
I.~T{\'a}czi, B.~Sinkovics, I.~Vokony, B.~Hartmann, The challenges of low
  voltage distribution system state estimation—an application oriented
  review, Energies 14~(17) (2021) 5363.

\bibitem{review4}
M.~Majdoub, J.~Boukherouaa, B.~Cheddadi, A.~Belfqih, O.~Sabri, T.~Haidi, A
  review on distribution system state estimation techniques, in: 2018 6th
  International Renewable and Sustainable Energy Conference (IRSEC), IEEE,
  2018, pp. 1--6.

\bibitem{6}
M.~Anton{\v{c}}i{\v{c}}, I.~Papi{\v{c}}, B.~Bla{\v{z}}i{\v{c}}, Robust and fast
  state estimation for poorly-observable low voltage distribution networks
  based on the kalman filter algorithm, Energies 12~(23) (2019) 4457.

\bibitem{12}
Y.~Zhang, J.~Wang, Z.~Li, Interval state estimation with uncertainty of
  distributed generation and line parameters in unbalanced distribution
  systems, IEEE Transactions on Power Systems 35~(1) (2019) 762--772.

\bibitem{13}
X.~Zhou, Z.~Liu, Y.~Guo, C.~Zhao, J.~Huang, L.~Chen, Gradient-based multi-area
  distribution system state estimation, IEEE Transactions on Smart Grid 11~(6)
  (2020) 5325--5338.

\bibitem{14}
Y.~Yao, X.~Liu, D.~Zhao, Z.~Li, Distribution system state estimation: A
  semidefinite programming approach, IEEE Transactions on Smart Grid 10~(4)
  (2018) 4369--4378.

\bibitem{5}
R.~Brandalik, W.-H. Wellssow, Power system state estimation with extended power
  formulations, International Journal of Electrical Power \& Energy Systems 115
  (2020) 105443.

\bibitem{2}
C.~Lu, J.~Teng, W.-H. Liu, Distribution system state estimation, IEEE
  Transactions on Power systems 10~(1) (1995) 229--240.

\bibitem{4}
D.~Waeresch, R.~Brandalik, W.~H. Wellssow, J.~Jordan, R.~Bischler,
  N.~Schneider, Linear state estimation in low voltage grids based on smart
  meter data, in: 2015 IEEE Eindhoven PowerTech, IEEE, 2015, pp. 1--6.

\bibitem{9}
H.~Wang, N.~N. Schulz, A revised branch current-based distribution system state
  estimation algorithm and meter placement impact, IEEE Transactions on Power
  Systems 19~(1) (2004) 207--213.

\bibitem{7}
M.~E. Baran, J.~Jung, T.~E. McDermott, Including voltage measurements in branch
  current state estimation for distribution systems, in: 2009 IEEE Power \&
  Energy Society General Meeting, IEEE, 2009, pp. 1--5.

\bibitem{8}
M.~E. Baran, A.~W. Kelley, A branch-current-based state estimation method for
  distribution systems, IEEE transactions on power systems 10~(1) (1995)
  483--491.

\bibitem{10}
M.~Pau, P.~A. Pegoraro, S.~Sulis, Efficient branch-current-based distribution
  system state estimation including synchronized measurements, IEEE
  Transactions on Instrumentation and Measurement 62~(9) (2013) 2419--2429.

\bibitem{11}
M.~Pau, P.~A. Pegoraro, S.~Sulis, Branch current state estimator for
  distribution system based on synchronized measurements, in: 2012 IEEE
  International Workshop on Applied Measurements for Power Systems (AMPS)
  Proceedings, IEEE, 2012, pp. 1--6.

\bibitem{review2}
M.~Pau, P.~A. Pegoraro, S.~Sulis, Performance of three-phase wls distribution
  system state estimation approaches, in: 2015 IEEE International Workshop on
  Applied Measurements for Power Systems (AMPS), IEEE, 2015, pp. 138--143.

\bibitem{Antonius23}
I.~Antonios, H.-P. Schwefel, Assessing the impact of measurement artifacts on
  energy loss calculation in electrical distribution grids (2023).

\bibitem{HPSsgcomm2019}
H.~Schwefel, J.~Rasmussen, R.~Olsen, H.~Ringgaard, N.~Silva, Using smart meter
  measurements to manage accuracy of current calculations in lv feeders, in:
  2019 IEEE International Conference on Communications, Control, and Computing
  Technologies for Smart Grids (SmartGridComm), IEEE, 2019.
\newblock \href {https://doi.org/10.1109/SmartGridComm.2019.8909723}
  {\path{doi:10.1109/SmartGridComm.2019.8909723}}.

\bibitem{florinpaper}
K.~Nainar, F.~Iov, \href{https://www.mdpi.com/2571-8797/3/2/22}{Three-phase
  state estimation for distribution-grid analytics}, Clean Technologies 3~(2)
  (2021) 395--408.
\newblock \href {https://doi.org/10.3390/cleantechnol3020022}
  {\path{doi:10.3390/cleantechnol3020022}}.
\newline\urlprefix\url{https://www.mdpi.com/2571-8797/3/2/22}

\bibitem{Imad21}
H.-P. Schwefel, I.~Antonios, L.~Lipsky,
  \href{https://www.mdpi.com/1424-8220/21/20/6903}{On the calculation of time
  alignment errors in data management platforms for distribution grid data},
  Sensors 21~(20) (2021).
\newblock \href {https://doi.org/10.3390/s21206903}
  {\path{doi:10.3390/s21206903}}.
\newline\urlprefix\url{https://www.mdpi.com/1424-8220/21/20/6903}

\bibitem{jagannathamrao04}
A.~K. Jagannatham, B.~D. Rao, Cramer-rao lower bound for constrained complex
  parameters, IEEE Signal Processing Letters 11~(11) (2004) 875--878.

\bibitem{gormanhero90}
J.~D. Gorman, A.~D. Hero, Lower bounds for parametric estimation with
  constraints, IEEE Transactions on Information Theory 26~(6) (1990)
  1285--1301.

\bibitem{saadat1999power}
H.~Saadat, Power system analysis, McGraw Hill, 1999.

\bibitem{N2DG_D5.3}
F.~Iov, et~al., Final consolidated results, EU project, Net2DG, nr. 774145
  (2021).

\bibitem{shahid21}
K.~Shahid, K.~Nainar, R.~L. Olsen, F.~Iov, M.~Lyhne, G.~Morgante, On the use of
  common information model for smart grid applications — a conceptual
  approach, IEEE Transactions on Smart Grid 12~(6) (2021) 5060--5072.
\newblock \href {https://doi.org/10.1109/TSG.2021.3095896}
  {\path{doi:10.1109/TSG.2021.3095896}}.

\bibitem{IEEEstdC37237}
H.~W.~g. IEEE Standards~Association, Ieee standard for requirements for time
  tags created by intelligent electronic devices, ieee std. c37.237-2018, IEEE
  Power and Energy Society (2018).

\bibitem{Nokia_clocksynch5G_21}
Nokia, Delivering high-quality networkbased synchronization, Tech. rep., Nokia
  (2021).

\bibitem{Patel21}
D.~Patel, J.~Diachina, S.~Ruffini, M.~D. Andrade, J.~Sachs, D.~P. Venmani, Time
  error analysis of 5g time synchronization solutions for time aware industrial
  networks (2021).

\end{thebibliography}

\appendix
\section{}
\begin{proposition}
\label{propMeanCovariacePseudocovariance}
Let $Z=(z+\epsilon_z) e^{j(\nu+\epsilon_\nu)}$, $z,\nu\in\mathbb{R}$, and $\epsilon_z$ and $\epsilon_\nu$ be independent random variables with $\mathbb{E}[\epsilon_z] = \mathbb{E}[\epsilon_\nu] = 0$, $\mathrm{Var}(\epsilon_z) = \sigma_z^2$ and $\mathrm{Var}(\epsilon_\nu) = \sigma_\nu^2$. Then the mean, variance and pseudo-variance of $Z$ are given by
\begin{equation*}
    \mathbb{E}[Z] =  z e^{j\nu} c_{\epsilon_\nu}(1),  \quad    \mathrm{Var}(Z) = (1-|c_{\epsilon_\nu}(1)^2|) z^2 + \sigma_z^2
\end{equation*}
and 
\begin{equation*}
    \mathrm{PVar}(Z) = e^{2j\nu}((z^2+\sigma_z^2)c_{\epsilon_\nu}(2) - z^2 c_{\epsilon_\nu}(1)^2),
\end{equation*}
where $c_{\epsilon_\nu}(t) = \mathbb{E}[e^{jt\epsilon_\nu}]$ is the characteristic function of $\epsilon_\nu$. 

\end{proposition}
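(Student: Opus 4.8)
The plan is to compute the three quantities directly from their definitions, using the independence of $\epsilon_z$ and $\epsilon_\nu$ to factor every expectation into a product of an expectation over $\epsilon_z$ and one over $\epsilon_\nu$, the latter being read off from the characteristic function $c_{\epsilon_\nu}$. Throughout, I would write $Z = e^{j\nu}(z+\epsilon_z)e^{j\epsilon_\nu}$ so that the deterministic phase factor comes out of every expectation immediately.

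For the mean, independence gives $\mathbb{E}[Z] = e^{j\nu}\,\mathbb{E}[z+\epsilon_z]\,\mathbb{E}[e^{j\epsilon_\nu}]$; the first factor equals $z$ since $\mathbb{E}[\epsilon_z]=0$, and the second equals $c_{\epsilon_\nu}(1)$ by definition, yielding $\mathbb{E}[Z]=z e^{j\nu}c_{\epsilon_\nu}(1)$. For the variance I would invoke the complex-variance identity $\mathrm{Var}(Z)=\mathbb{E}[|Z|^2]-|\mathbb{E}[Z]|^2$. The key observation is that $|Z|^2=(z+\epsilon_z)^2$ depends only on $\epsilon_z$, so $\mathbb{E}[|Z|^2]=z^2+\sigma_z^2$; subtracting $|\mathbb{E}[Z]|^2=z^2|c_{\epsilon_\nu}(1)|^2$ and using $|c_{\epsilon_\nu}(1)^2|=|c_{\epsilon_\nu}(1)|^2$ gives $\mathrm{Var}(Z)=(1-|c_{\epsilon_\nu}(1)^2|)z^2+\sigma_z^2$. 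For the pseudo-variance I would use $\mathrm{PVar}(Z)=\mathbb{E}[Z^2]-(\mathbb{E}[Z])^2$; here $Z^2=e^{2j\nu}(z+\epsilon_z)^2 e^{2j\epsilon_\nu}$, so by independence $\mathbb{E}[Z^2]=e^{2j\nu}(z^2+\sigma_z^2)\,\mathbb{E}[e^{2j\epsilon_\nu}]=e^{2j\nu}(z^2+\sigma_z^2)c_{\epsilon_\nu}(2)$, and subtracting $(\mathbb{E}[Z])^2=z^2 e^{2j\nu}c_{\epsilon_\nu}(1)^2$ produces the stated formula.

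None of these steps poses a genuine obstacle; the calculation is elementary once the right definitions are in place. The only points requiring care are (i) using the correct definitions of variance and pseudo-variance for a complex-valued random variable, namely that the former contains a complex conjugate ($\mathbb{E}[|Z-\mathbb{E}Z|^2]$) whereas the latter does not ($\mathbb{E}[(Z-\mathbb{E}Z)^2]$), and (ii) tracking the different dependence of $|Z|^2$ and $Z^2$ on the phase noise: $|Z|^2$ is insensitive to $\epsilon_\nu$ while $Z^2$ carries the factor $e^{2j\epsilon_\nu}$, which is exactly what makes $c_{\epsilon_\nu}(2)$ appear in the pseudo-variance but only $c_{\epsilon_\nu}(1)$-type terms in the mean and variance.
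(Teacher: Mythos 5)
Your proposal is correct and follows essentially the same route as the paper's own proof: factor expectations by independence, compute $\mathbb{E}[Z]$, $\mathbb{E}[|Z|^2]$ and $\mathbb{E}[Z^2]$, and substitute into $\mathrm{Var}(Z)=\mathbb{E}[|Z|^2]-|\mathbb{E}[Z]|^2$ and $\mathrm{PVar}(Z)=\mathbb{E}[Z^2]-\mathbb{E}[Z]^2$. Your explicit remarks on why $|Z|^2$ is insensitive to $\epsilon_\nu$ while $Z^2$ picks up $c_{\epsilon_\nu}(2)$ make the argument slightly more transparent than the paper's, but the substance is identical.
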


\begin{proof}
Using that $\epsilon_z$ and $\epsilon_\nu$ are independent and $\mathbb{E}[\epsilon_z]=0$, we get that
\begin{equation*}
    \mathbb{E}[Z] = \mathbb{E}[z+\epsilon_z]\mathbb{E}[e^{j(\nu+\epsilon_\nu)}] = z e^{j\nu} c_{\epsilon_\nu}(1),
\end{equation*}
\begin{equation*}
    \mathbb{E}[|Z|^2] = \mathbb{E}[(z+\epsilon_z)^2] = z^2 + \sigma_z^2
\end{equation*}
and
\begin{equation*}
    \mathbb{E}[Z^2] = \mathbb{E}[(z+\epsilon_z)^2]\mathbb{E}[e^{2j(\nu+\epsilon_\nu)}] = (z^2+\sigma_z^2) e^{2j\nu} c_{\epsilon_\nu}(2).
\end{equation*}
The formulas for the variance and pseudo-variance in the general case follows by inserting the expected values into $\mathrm{Var}(Z) = \mathbb{E}[|Z|^2] - |\mathbb{E}[Z]|^2$ and $\mathrm{Var}(Z) = \mathbb{E}[Z^2] - \mathbb{E}[Z]^2$.  
\end{proof}
\begin{corollary} Let $Z$ defined as in Proposition~\ref{propMeanCovariacePseudocovariance} and let $\epsilon_\nu$ be normally distributed. Then
\begin{equation*}
  \mathbb{E}[Z] =  z e^{j\nu} e^{-\sigma_\nu^2/2},  \quad      \mathrm{Var}(Z) = (1-e^{-\sigma_\nu^2}) z^2 + \sigma_z^2
\end{equation*}
and 
\begin{equation*}
    \mathrm{PVar}(Z) = e^{2j\nu}((z^2+\sigma_z^2)e^{-2\sigma_\nu^2} - z^2 e^{-\sigma_\nu^2}).
\end{equation*}
\end{corollary}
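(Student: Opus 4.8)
The plan is to specialize the general formulas from Proposition~\ref{propMeanCovariacePseudocovariance} by substituting the characteristic function of a zero-mean normal random variable. The key fact I would invoke is that if $\epsilon_\nu \sim N(0,\sigma_\nu^2)$, then its characteristic function is $c_{\epsilon_\nu}(t) = \mathbb{E}[e^{jt\epsilon_\nu}] = e^{-t^2\sigma_\nu^2/2}$. This is a standard result (it follows from completing the square in the Gaussian integral, or from the moment generating function $\mathbb{E}[e^{s\epsilon_\nu}] = e^{s^2\sigma_\nu^2/2}$ evaluated at $s = jt$), so I would state it without a detailed derivation.

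With that identity in hand, the proof is essentially bookkeeping. First I would evaluate $c_{\epsilon_\nu}(1) = e^{-\sigma_\nu^2/2}$ and substitute into the mean formula $\mathbb{E}[Z] = z e^{j\nu} c_{\epsilon_\nu}(1)$ to get $z e^{j\nu} e^{-\sigma_\nu^2/2}$. Next, I would note that $|c_{\epsilon_\nu}(1)^2| = |e^{-\sigma_\nu^2}| = e^{-\sigma_\nu^2}$ (the quantity is real and positive, so the modulus is trivial), and substitute into $\mathrm{Var}(Z) = (1-|c_{\epsilon_\nu}(1)^2|)z^2 + \sigma_z^2$ to obtain $(1-e^{-\sigma_\nu^2})z^2 + \sigma_z^2$. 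Finally, I would compute $c_{\epsilon_\nu}(2) = e^{-2\sigma_\nu^2}$ and $c_{\epsilon_\nu}(1)^2 = e^{-\sigma_\nu^2}$, then substitute into the pseudo-variance formula $\mathrm{PVar}(Z) = e^{2j\nu}\big((z^2+\sigma_z^2)c_{\epsilon_\nu}(2) - z^2 c_{\epsilon_\nu}(1)^2\big)$ to get $e^{2j\nu}\big((z^2+\sigma_z^2)e^{-2\sigma_\nu^2} - z^2 e^{-\sigma_\nu^2}\big)$.

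There is essentially no obstacle here: the corollary is a direct specialization and the only ingredient beyond Proposition~\ref{propMeanCovariacePseudocovariance} is the well-known Gaussian characteristic function. The only point requiring a word of care is the modulus $|c_{\epsilon_\nu}(1)^2|$ appearing in the variance formula — but since the argument of the characteristic function is real here, $c_{\epsilon_\nu}(1)$ is real and positive, so the absolute value can simply be dropped. I would present the proof in three short displayed lines, one for each of the mean, variance, and pseudo-variance.
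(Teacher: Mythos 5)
Your proposal is correct and matches the paper's proof, which likewise just substitutes the Gaussian characteristic function $c_{\epsilon_\nu}(t)=e^{-t^2\sigma_\nu^2/2}$ into Proposition~\ref{propMeanCovariacePseudocovariance}; you merely spell out the evaluations at $t=1$ and $t=2$ more explicitly. No gaps.
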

\begin{proof}
Use the characteristic function for the normal distribution in Proposition~\ref{propMeanCovariacePseudocovariance}.
\end{proof}


\end{document}